\PassOptionsToPackage{unicode}{hyperref}
\PassOptionsToPackage{hyphens}{url}
\PassOptionsToPackage{dvipsnames,svgnames,x11names}{xcolor}
\documentclass[12pt]{article}
\usepackage{amsmath,amssymb}
\usepackage{iftex}
\ifPDFTeX
  \usepackage[T1]{fontenc}
  \usepackage[utf8]{inputenc}
  \usepackage{textcomp} % provide euro and other symbols
\else % if luatex or xetex
  \usepackage{unicode-math}
  \usepackage[toc]{appendix}
  \defaultfontfeatures{Scale=MatchLowercase}
  \defaultfontfeatures[\rmfamily]{Ligatures=TeX,Scale=1}
\fi
\usepackage{lmodern}
\ifPDFTeX\else  
\fi
\IfFileExists{upquote.sty}{\usepackage{upquote}}{}
\IfFileExists{microtype.sty}{% use microtype if available
  \usepackage[]{microtype}
  \UseMicrotypeSet[protrusion]{basicmath} % disable protrusion for tt fonts
}{}
\makeatletter
\@ifundefined{KOMAClassName}{% if non-KOMA class
  \IfFileExists{parskip.sty}{%
    \usepackage{parskip}
  }{% else
    \setlength{\parindent}{0pt}
    \setlength{\parskip}{6pt plus 2pt minus 1pt}}
}{% if KOMA class
  \KOMAoptions{parskip=half}}
\makeatother
\usepackage{xcolor}
\makeatletter
\ifx\paragraph\undefined\else
  \let\oldparagraph\paragraph
  \renewcommand{\paragraph}{
    \@ifstar
      \xxxParagraphStar
      \xxxParagraphNoStar
  }
  \newcommand{\xxxParagraphStar}[1]{\oldparagraph*{#1}\mbox{}}
  \newcommand{\xxxParagraphNoStar}[1]{\oldparagraph{#1}\mbox{}}
\fi
\ifx\subparagraph\undefined\else
  \let\oldsubparagraph\subparagraph
  \renewcommand{\subparagraph}{
    \@ifstar
      \xxxSubParagraphStar
      \xxxSubParagraphNoStar
  }
  \newcommand{\xxxSubParagraphStar}[1]{\oldsubparagraph*{#1}\mbox{}}
  \newcommand{\xxxSubParagraphNoStar}[1]{\oldsubparagraph{#1}\mbox{}}
\fi
\makeatother

\usepackage{longtable,booktabs,array}
\usepackage{calc} % for calculating minipage widths
\usepackage{etoolbox}
\makeatletter
\patchcmd\longtable{\par}{\if@noskipsec\mbox{}\fi\par}{}{}
\makeatother
\IfFileExists{footnotehyper.sty}{\usepackage{footnotehyper}}{\usepackage{footnote}}
\makesavenoteenv{longtable}
\usepackage{graphicx}
\makeatletter
\def\maxwidth{\ifdim\Gin@nat@width>\linewidth\linewidth\else\Gin@nat@width\fi}
\def\maxheight{\ifdim\Gin@nat@height>\textheight\textheight\else\Gin@nat@height\fi}
\makeatother
\setkeys{Gin}{width=\maxwidth,height=\maxheight,keepaspectratio}
\makeatletter
\def\fps@figure{htbp}
\makeatother

\makeatletter
\@ifpackageloaded{caption}{}{\usepackage{caption}}
\AtBeginDocument{%
\ifdefined\contentsname
  \renewcommand*\contentsname{Table of contents}
\else
  \newcommand\contentsname{Table of contents}
\fi
\ifdefined\listfigurename
  \renewcommand*\listfigurename{List of Figures}
\else
  \newcommand\listfigurename{List of Figures}
\fi
\ifdefined\listtablename
  \renewcommand*\listtablename{List of Tables}
\else
  \newcommand\listtablename{List of Tables}
\fi
\ifdefined\figurename
  \renewcommand*\figurename{Figure}
\else
  \newcommand\figurename{Figure}
\fi
\ifdefined\tablename
  \renewcommand*\tablename{Table}
\else
  \newcommand\tablename{Table}
\fi
}
\@ifpackageloaded{float}{}{\usepackage{float}}
\floatstyle{ruled}
\@ifundefined{c@chapter}{\newfloat{codelisting}{h}{lop}}{\newfloat{codelisting}{h}{lop}[chapter]}
\floatname{codelisting}{Listing}

\makeatother
\makeatletter
\@ifpackageloaded{caption}{}{\usepackage{caption}}
\@ifpackageloaded{subcaption}{}{\usepackage{subcaption}}
\makeatother

\ifLuaTeX
  \usepackage{selnolig}  % disable illegal ligatures
\fi
\usepackage{natbib}
\usepackage{bookmark}

\IfFileExists{xurl.sty}{\usepackage{xurl}}{} % add URL line breaks if available
\hypersetup{
  pdftitle={Title},
  pdfauthor={Author 1; Author 2},
  pdfkeywords={3 to 6 keywords, that do not appear in the title},
  colorlinks=true,
  linkcolor={blue},
  filecolor={Maroon},
  citecolor={Blue},
  urlcolor={Blue},
  pdfcreator={LaTeX via pandoc}}

\newtheorem{theorem}{Theorem}[section]
\newtheorem{corollary}{Corollary}[section]

\newtheorem{assumption}{A.}
\usepackage{algorithm}
\usepackage{algpseudocode}

\newcommand{\anon}{1}

\begin{document}

\def\spacingset#1{\renewcommand{\baselinestretch}%
{#1}\small\normalsize} \spacingset{1}

%%%%%%%%%%%%%%%%%%%%%%%%%%%%%%%%%%%%%%%%%%%%%%%%%%%%%%%%%%%%%%%%%%%%%%%%%%%%%%

\if1\anon
{
  \title{\bf Towards Optimal Estimators for\\ 
  Randomized Control Trials}
  \author{Harsh Parikh\hspace{.2cm}\\
    SCOT, Amazon \\ 
    Department of Biostatistics, Yale University \and
    Gabriel Levin-Konigsberg \\
    SCOT, Amazon \and
    Nilesh Tripuraneni\\ Google \and Dhruv Madeka\\ Anthropic \and
    Michael I. Jordan\\UC Berkeley \and
    Dean Foster \\
    SCOT, Amazon \and
    Dominique Perrault-Joncas \\
    SCOT, Amazon \and
    Alexander Volfovsky \\
    SCOT, Amazon \\
    Department of Statistical Science, Duke University \\
    }
  \maketitle
} \fi

\if0\anon
{
  \bigskip
  \bigskip
  \bigskip
  \begin{center}
    {\LARGE\bf Title}
\end{center}
  \medskip
} \fi

\bigskip
\begin{abstract}
    Randomized controlled trials (RCTs) are fundamental tools for causal inference across technology companies, pharmaceutical research, and federal agencies. While the standard difference-in-means estimator provides unbiased treatment effect estimates, it often lacks precision, particularly when treatment effects are heterogeneous or outcomes exhibit heavy-tailed distributions. Although numerous precision-enhancing methods exist---from covariate adjustment techniques to variance reduction strategies---recent research demonstrates that no single estimator performs optimally across all datasets. Rather than seeking the best estimator for individual RCTs, which risks compromising scientific validity through convenient selection, we propose a principled framework for identifying optimal estimators within families of RCTs based on specific analytical goals. Our approach uses sample splitting to estimate the distribution of evaluation metrics (e.g., mean squared error, regret) across RCT families, enabling systematic comparisons between estimators while maintaining asymptotic guarantees. We demonstrate this framework using a sample of Amazon's Supply Chain Optimization Technology trials and the Strengthening Democracy Challenge dataset (25 interventions). Results reveal that optimal estimators vary significantly by analytical objective: weighted least squares performs best for inference goals, while difference-in-means minimizes regret for decision-making contexts. This work provides actionable guidance for estimator selection while preserving methodological rigor across diverse research applications.

\end{abstract}

\noindent%
{\it Keywords:} experiments, data integration, causal inference, machine learning
\vfill

\newpage
\spacingset{1.8} % DON'T change the spacing!

	\section{Introduction}

Randomized controlled trials (RCTs) have become the gold standard for causal inference across diverse scientific domains.
% from technology companies like Amazon, Meta, and Google optimizing user experiences, to pharmaceutical giants such as Eli Lilly and Pfizer evaluating drug efficacy, to federal agencies including the FDA and NIH informing policy decisions. 
% Often the outcome is same across all these trial. 
The appeal of RCTs lies in their exceptional internal validity: treatment randomization effectively eliminates unobserved confounding, ensuring unbiased estimation of causal effects is possible. The knowledge from these experiments serves dual purposes---advancing scientific understanding of underlying causal mechanisms and informing critical downstream decisions about intervention deployment. Each experiment requires an estimation strategy that translates the observed outcomes into an estimate of the treatment effect, and currently there is no single strategy for selecting such an estimator for a class of related RCTs. 

A foundational estimator for RCTs is the the standard difference-in-means (DM) estimator, which simply contrasts the average outcomes among those assigned to treatment and control in the experiment. While this estimator is unbiased it can suffer from inadequate precision. This limitation becomes particularly acute when treatment effects exhibit substantial heterogeneity or when outcome distributions display heavy-tailed characteristics---scenarios increasingly common in modern applications. The precision challenge has motivated extensive methodological development along two primary axes: covariate adjustment strategies and variance reduction techniques. 
% First, 
% researchers have pursued 
% covariate adjustment strategies that leverage pretreatment variables associated with outcomes or treatment effects. 
The landscape of available methods has expanded dramatically in both directions due to advances in machine learning and causal inference, encompassing approaches from simple ordinary least squares to sophisticated random forests and neural networks to changes in the targeted estimand. 
% Second, variance reduction techniques have emerged that either directly control outcome distributions (such as winsorization) or modify the estimand itself (such as data-adaptive weighted average treatment effects).

This methodological abundance naturally raises a fundamental question: \textit{which estimator should researchers choose for their analysis?} Validation studies in causal inference have consistently demonstrated that no universal ``optimal'' estimator exists across all datasets. Selecting estimators on a case-by-case basis, however, introduces significant risks to scientific validity, as researchers may gravitationally select methods that support preferred narratives rather than those that provide the most reliable inference. We address this challenge by shifting focus from finding optimal estimators for individual RCTs to identifying superior approaches for entire families of related trials. 

Within each scientific domain, hundreds or thousands of experiments are conducted each year with each experiment measuring the impact of potentially different treatments on the \textit{same outcome}: e.g. technology companies like Amazon, Meta, and Google optimizing user experiences where the outcomes are click-through rates, pharmaceutical giants such as Eli Lilly and Pfizer evaluating drug efficacy in terms of all-cause-mortality and federal agencies including the FDA and NIH informing policy decisions --- where each experiment aims to evaluate the impact on the same outcome. 
% This perspective recognizes that many research contexts involve repeated experimentation on similar populations, interventions, or outcomes---whether through platform experimentation at technology companies, drug development pipelines at pharmaceutical firms, or policy evaluation programs at government agencies.

Our contribution is a general framework that characterizes optimal estimators for specified classes of RCTs, explicitly accounting for intended analytical goals and outcome characteristics. The framework employs sample splitting methodology to estimate evaluation metric distributions across trial families, ensuring valid asymptotic behavior while enabling systematic estimator comparisons. We demonstrate the framework's utility through two case studies: Amazon's Supply Chain Optimization Technology (SCOT) trials comprising a sample of 556 experiments RCTs focused on a common financial outcome metric, and the Strengthening Democracy Challenge dataset encompassing 25 interventions targeting political attitudes and behaviors \citep{voelkel2024mega}.
Our empirical findings confirm the absence of universally optimal estimators while providing actionable guidance for method selection. For the Amazon SCOT trials, we find that analytical objectives critically determine optimal choices. Similarly, analysis of the Democracy Challenge data reveals outcome-dependent optimal estimators, reinforcing the importance of context-specific method selection.

This work makes several key contributions to the statistical literature on causal inference. First, we provide a principled alternative to ad hoc estimator selection that preserves scientific integrity while accommodating methodological flexibility. Second, we demonstrate how evaluation criteria should align with analytical objectives, distinguishing between inference and decision-making goals. Third, we offer a practical framework that practitioners can readily implement across diverse experimental contexts. Finally, our empirical applications showcase the framework's value in both academic and business settings, highlighting its broad applicability across research domains.

Section~\ref{sec:literature} discusses about relevant literature. Section~\ref{sec:prelim} introduces the setup and notation as well as discusses some of the existing estimator for treatment effects. Section~\ref{sec:val} introduces our framework and method for characterizing optimal estimator. Section~\ref{sec:amazon} discusses our business case study on Amazon data and Section~\ref{sec:case_study} discusses our social science case study on SDC data.

        \section{Relevant Literature}
\label{sec:literature}
Randomized controlled trials have long been considered the gold standard for estimating causal effects across various disciplines \citep{imbens2015causal}. The fundamental appeal of RCTs lies in their ability to balance confounders across treatment conditions via treatment randomization, thereby enabling unbiased estimation of average treatment effects. The simplest approach is the difference-in-means estimator, which directly compares outcome averages between treatment and control groups \citep{athey2017econometrics}. While unbiased, researchers have increasingly recognized that precision can be substantially improved through more sophisticated methods without compromising internal validity \citep{Benkeser2020}.

\paragraph*{Covariate Adjustment in RCTs.} Covariate adjustment has become standard for improving precision in RCT analysis. Incorporating pre-treatment covariates through regression models can significantly reduce variance in treatment effect estimates \citep{freedman2008regression, lin2013agnostic}. Unlike in observational studies, the primary motivation here is variance reduction rather than addressing confounding \citep{tsiatis2008covariate}. \citet{yang2001efficiency} showed that ANCOVA estimators remain consistent even under model misspecification while providing efficiency gains. Recent work has explored high-dimensional regression adjustment using regularization methods such as LASSO \citep{bloniarz2016lasso} and ridge regression \citep{wager2016high}.

\paragraph*{Methods from Observational Studies Applied to RCTs.} Several methods developed for observational studies have been adapted to improve precision in RCTs. Inverse Probability Weighting (IPW) estimators account for chance imbalances in covariates \citep{hirano2003efficient}, while meta-learners including T-learners, S-learners, and X-learners model treatment effect heterogeneity using machine learning approaches \citep{kunzel2019metalearners}. Doubly robust estimators like Augmented IPW (AIPW) combine outcome regression with propensity score methods \citep{funk2011doubly}, with recent extensions to double/debiased machine learning \citep{chernozhukov2018double}.

\paragraph*{The Challenge of Method Validation and Selection.} With this proliferation of methods, the challenge has shifted from developing new estimators to selecting among existing alternatives. Traditional validation approaches for causal inference methods fall into three categories: face-validity tests that assess estimates against expert intuition, placebo or negative control tests that evaluate a method's ability to recover null effects under domain-driven constraints, and synthetic data studies with known data-generating mechanisms.

\paragraph*{Synthetic Data Approaches and Their Limitations.} Recent literature has explored sophisticated synthetic data generation to identify optimal estimators for particular contexts. Approaches using variational autoencoders (VAEs) and generative adversarial networks (GANs) have been proposed to simulate datasets with known causal structures \citep{athey2018gan, parikh2022validating}. Particularly notable is the work by \citet{parikh2022validating}, who developed a comprehensive framework for validating causal inference methods using VAE-based synthetic data generation, enabling systematic comparison across varying degrees of confounding and treatment effect heterogeneity.

Despite these advances, synthetic data approaches face important limitations that motivate our alternative framework. As noted by \citet{parikh2022validating} and \citet{gentzel2019case}, performance is highly sensitive to training procedures and hyperparameter choices of generative models. \citet{losch2021optimal} found significant divergence between estimator rankings on synthetic versus real-world data, suggesting that synthetic performance may not translate reliably to actual applications. Moreover, the lack of theoretical guarantees regarding synthetic-to-real performance relationships creates the risk of ``convenient estimator choice,'' where researchers might selectively report results from synthetic settings favorable to preferred methods.

\paragraph*{Gap in Current Literature.} While substantial progress has been made in developing precision-enhancing methods and validation approaches, a critical gap remains in providing principled guidance for estimator selection in practice. Current approaches typically focus on either developing new methods or evaluating methods in isolation, rather than providing systematic frameworks for choosing among competing approaches based on research context and analytical objectives. Our work addresses this gap by proposing a framework that leverages families of related RCTs to identify optimal estimators while accounting for both intended use cases (inference versus decision-making) and distributional properties of outcomes within specific research domains. Unlike synthetic approaches, our framework directly evaluates estimators on collections of real RCTs sharing common outcome measures, providing stronger assurance regarding cross-context generalizability while maintaining principled statistical foundations.
        \section{Background}
\label{sec:prelim}

\subsection{Setup \& Notations}
We consider a framework where both individual studies and their design characteristics are drawn from broader populations, allowing us to develop estimator selection principles that generalize across families of related experiments.

Let $\mathcal{O}$ denote a countably infinite superpopulation of units eligible for treatment interventions, such as all potential customers, products, or individuals in a given domain. We consider a finite set of potential treatments $\mathcal{T} = \{0, 1, \ldots, K-1\}$, where treatment $0$ typically represents a control or baseline condition, and the remaining treatments represent various interventions of interest.

Our analysis focuses on a collection of $N$ studies, indexed by the random variable $S \in \{1, 2, \ldots, N\}$. Each study $S = s$ is generated through the following two-stage sampling process:

\textit{Stage 1: Treatment Selection.} For study $s$, we first randomly sample two treatments $t_a^{(s)}$ and $t_b^{(s)}$ uniformly at random from $\mathcal{T} \times \mathcal{T}$. These represent the two treatment conditions to be compared in the study, where typically one serves as the active treatment and the other as a control or comparison condition.

\textit{Stage 2: Population Sampling.} Given the selected treatment pair $(t_a^{(s)}, t_b^{(s)})$, we recruit $m_s$ units from the superpopulation $\mathcal{O}$ through some sampling mechanism (which may be random or non-random) to form the study sample $\mathcal{O}_s \subseteq \mathcal{O}$. The sampling mechanism may depend on study-specific factors such as eligibility criteria, temporal constraints, or resource limitations.

Thus, each study $s$ is uniquely identified by the triple $(t_a^{(s)}, t_b^{(s)}, \mathcal{O}_s)$, encompassing both the treatments under comparison and the specific population recruited for that comparison.

The estimand of interest for study $s$ is the average treatment effect within the study population:
\begin{equation}
\tau^{(s)} = \mathbb{E}[Y_i(t_a^{(s)}) - Y_i(t_b^{(s)}) \mid S = s],
\end{equation}
where $Y_i(t)$ denotes the potential outcome for unit $i$ under treatment $t$, following the standard potential outcomes framework \citep{neyman1923application, rubin1974estimating}.

For each unit $i \in \mathcal{O}_s$ recruited into study $s$, we observe the following data:
\begin{itemize}
    \item \textbf{Pretreatment covariates}: $X_i \in \mathbb{R}^p$, representing baseline characteristics measured before treatment administration
    \item \textbf{Treatment assignment}: $T_i \in \{t_a^{(s)}, t_b^{(s)}\}$, the randomly assigned treatment condition
    \item \textbf{Outcome}: $Y_i \in \mathbb{R}$, measured at a fixed time point $k$ periods after treatment administration
\end{itemize}

The observed dataset for study $s$ is $\mathcal{D}_s = \{(Y_i, T_i, X_i) : i \in \mathcal{O}_s\}$. Our framework leverages the collection of datasets $\{\mathcal{D}_s\}_{s=1}^N$ from this family of related studies to identify optimal estimators for the average treatment effects $\{\tau^{(s)}\}_{s=1}^N$.

We make the following key assumptions that enable our analysis:

\begin{assumption}[Randomization within Studies]
\label{ass:randomization}
Within each study $s$, treatment assignment is randomized such that
$Y_i(t) \perp T_i \mid S = s,$ and $P(T_i = t \mid X_i) > 0$ $\forall t \in \{t_a^{(s)}, t_b^{(s)}\}$, $\forall i \in \mathcal{O}_s,$
\end{assumption}

\begin{assumption}[Covariate-Conditional Study Independence]
\label{ass:study_independence}
Conditional on covariates, potential outcomes are independent of study membership: $Y_i(t) \perp S \mid X_i, \quad \forall t \in \mathcal{T}, \quad \forall i \in \mathcal{O}.$
\end{assumption}

Assumption~\ref{ass:study_independence} ensures that studies differ only in their covariate distributions, not in the underlying causal mechanisms.

\begin{assumption}[Non-overlapping Study Samples]
\label{ass:non_overlap}
Each unit in the superpopulation participates in at most one study:
$\mathcal{O}_s \cap \mathcal{O}_{s'} = \emptyset, \quad \forall s \neq s'.$
\end{assumption}

Under Assumption \ref{ass:randomization}, the average treatment effect in study $s$ can be identified as:
\begin{equation}
\tau^{(s)} = \mathbb{E}[Y_i \mid T_i = t_a^{(s)}, S = s] - \mathbb{E}[Y_i \mid T_i = t_b^{(s)}, S = s].
\end{equation}

Assumption \ref{ass:study_independence} is crucial for our framework as it allows us to meaningfully compare estimator performance across studies---it ensures that any differences in estimator performance across studies can be attributed to differences in covariate distributions and sample compositions rather than fundamental differences in causal mechanisms. Assumption \ref{ass:non_overlap} simplifies the statistical analysis by ensuring independence across study samples and avoiding complications from repeated measurements of the same units.

This setup naturally accommodates scenarios where studies share common treatments (when the same treatment pairs are drawn multiple times) as well as cases where studies evaluate entirely different treatment comparisons, providing a flexible framework for analyzing families of related experiments.
	\section{Framework for Comparing Estimators}
\label{sec:val}

This section presents our methodology for rigorously assessing and comparing treatment effect estimators across families of randomized controlled trials. Our approach leverages the collection of studies to estimate the distribution of estimator performance, enabling principled comparisons while avoiding overfitting to any single dataset.

\subsection{Performance Metrics for Different Analytical Goals}

Let $\hat{\tau}^{(s)}_\alpha$ denote the estimated treatment effect when applying estimator $\alpha$ to study $s$, where the specific treatments $(t_a^{(s)}, t_b^{(s)})$ and sample $\mathcal{O}_s$ are determined by the study design. We assess estimator performance using metrics $\Delta(\cdot, \cdot)$ that quantify the discrepancy between estimates and true treatment effects, with the choice of metric depending on the intended use of the estimates.

\textbf{Inference-Focused Applications.} For applications where the primary goal is accurate parameter estimation, we use mean squared error:
\begin{equation}
\Delta_{\text{MSE}}(\hat{\tau}^{(s)}_\alpha, \tau^{(s)}) = (\hat{\tau}^{(s)}_\alpha - \tau^{(s)})^2.
\end{equation}
This metric captures both bias and variance, providing a comprehensive measure of estimator precision for statistical inference.

\textbf{Decision-Making Applications.} For applications where estimates inform treatment rollout decisions, we focus on the economic consequences of estimation errors. We consider a simple decision rule where treatment is rolled out if $\hat{\tau}^{(s)}_\alpha > c \cdot \text{SE}(\hat{\tau}^{(s)}_\alpha)$ for some critical threshold $c \geq 0$. The regret from suboptimal decisions is:
\begin{equation}
\Delta_{\text{Regret}}(\hat{\tau}^{(s)}_\alpha, \tau^{(s)}, c) = |\mathcal{O}_s| \times \left( \tau^{(s)} \times \left(\mathbf{1}[\tau^{(s)} > 0] - \mathbf{1}[\hat{\tau}^{(s)}_\alpha > c \cdot \text{SE}(\hat{\tau}^{(s)}_\alpha)]\right) \right),
\end{equation}
where $|\mathcal{O}_s|$ represents the scale of potential implementation. This metric quantifies the opportunity cost when beneficial treatments are not rolled out (false negatives) and the harm when ineffective treatments are implemented (false positives).

% \textbf{Power.} 

% $P(\hat{\tau}^{(s)}_\alpha < c \cdot \text{SE}(\hat{\tau}^{(s)}_\alpha) \mid \tau > 0)$

\subsection{Framework for Estimator Comparison}

For a given performance metric $\Delta$, we define the expected performance of estimator $\alpha$ in study $s$ as:
\begin{equation}
\epsilon^{(s)}_\alpha = \mathbb{E}_{\mathcal{O}_s}\left[\Delta(\hat{\tau}^{(s)}_\alpha, \tau^{(s)}) \mid S = s\right],
\end{equation}
where the expectation is taken over the randomness in sample composition $\mathcal{O}_s$ and treatment assignment within the study.

The overall performance of estimator $\alpha$ across the family of studies is:
\begin{equation}
M_\alpha = \mathbb{E}_S\left[\epsilon^{(s)}_\alpha\right] = \mathbb{E}_S\left[\mathbb{E}_{\mathcal{O}_s}\left[\Delta(\hat{\tau}^{(s)}_\alpha, \tau^{(s)}) \mid S = s\right]\right].
\end{equation}

We say that estimator $\alpha$ dominates estimator $\beta$ if $M_\alpha < M_\beta$. To test this comparison, we define the test statistic:
\begin{equation}
\theta(\alpha,\beta) = M_\beta - M_\alpha,
\end{equation}
where $\theta(\alpha,\beta) > 0$ indicates that $\alpha$ outperforms $\beta$.

\subsection{Sample-Splitting Estimation Procedure}

Since true treatment effects $\{\tau^{(s)}\}_{s=1}^N$ are unknown, we cannot directly compute the performance metrics. We propose a sample-splitting approach that provides valid estimates of $\theta(\alpha,\beta)$ while avoiding overfitting.

\textbf{Cross-Fitting Protocol.} For each study $s$, we randomly partition the sample $\mathcal{O}_s$ into $K$ non-overlapping subsets of approximately equal size: $\{\mathcal{O}_{s,k}\}_{k=1}^K$ such that $\cup_{k=1}^K \mathcal{O}_{s,k} = \mathcal{O}_s$ and $\mathcal{O}_{s,i} \cap \mathcal{O}_{s,j} = \emptyset$ for $i \neq j$.

For each fold $k$, we:
\begin{enumerate}
\item Use subsample $\mathcal{O}_{s,k}$ to compute a difference-in-means estimate: $\hat{\tau}^{(s)}_{\text{DM},k} = \bar{Y}_{s,k}^{(1)} - \bar{Y}_{s,k}^{(0)}$, where $\bar{Y}_{s,k}^{(j)}$ is the sample mean outcome for treatment $j$ in fold $k$.
\item Use the remaining data $\mathcal{O}_{s,-k} = \mathcal{O}_s \setminus \mathcal{O}_{s,k}$ to compute the estimator of interest: $\hat{\tau}^{(s)}_{\alpha,-k}$.
\end{enumerate}

The cross-fitted performance estimate for estimator $\alpha$ in study $s$ is:
\begin{equation}
\hat{\epsilon}^{(s)}_\alpha = \frac{1}{K} \sum_{k=1}^K \Delta\left(\hat{\tau}^{(s)}_{\alpha,-k}, \hat{\tau}^{(s)}_{\text{DM},k}\right).
\end{equation}

This procedure leverages the fact that under Assumption \ref{ass:randomization}, the difference-in-means estimator is unbiased: $\mathbb{E}[\hat{\tau}^{(s)}_{\text{DM},k}] = \tau^{(s)}$. The cross-fitting ensures that the "ground truth" estimate and the estimator being evaluated use independent data, preventing overfitting.

The estimated overall performance is $\hat{M}_\alpha = \frac{1}{N} \sum_{s=1}^N \hat{\epsilon}^{(s)}_\alpha$, and the test statistic becomes:
\begin{equation}
\hat{\theta}(\alpha,\beta) = \hat{M}_\beta - \hat{M}_\alpha = \frac{1}{N} \sum_{s=1}^N \left(\hat{\epsilon}^{(s)}_\beta - \hat{\epsilon}^{(s)}_\alpha\right).
\end{equation}

\begin{theorem}[Unbiasedness of Cross-Fitted MSE-based Test Statistic]
\label{thm:unbiased_estimator}
Given assumption A.1-3, the cross-fitted MSE statistic is unbiased: $\mathbb{E}\left[\hat{\theta}_{MSE}(\alpha, \beta) - \theta_{MSE}(\alpha, \beta) \right] = 0.$
\end{theorem}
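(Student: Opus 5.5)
The plan is to work study by study and fold by fold, expand the squared loss around the true effect $\tau^{(s)}$, and show that every term other than the target MSE either has mean zero or is the same for $\alpha$ and $\beta$, so that it cancels in the difference $\hat{\epsilon}^{(s)}_\beta - \hat{\epsilon}^{(s)}_\alpha$. Since $\hat{\theta}_{MSE}(\alpha,\beta)$ is an average over $s$ and $k$ of such differences, linearity of expectation then reduces the claim to a single $(s,k)$ pair. The outer average over $S$ is handled at the end by the tower property, because $M_\alpha = \mathbb{E}_S[\epsilon^{(s)}_\alpha]$.

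For fixed $s$ and $k$, write $\hat{\tau}_{\alpha,-k} - \hat{\tau}_{\text{DM},k} = (\hat{\tau}_{\alpha,-k} - \tau^{(s)}) - (\hat{\tau}_{\text{DM},k} - \tau^{(s)})$. Squaring gives
\begin{equation*}
\Delta_{\text{MSE}}(\hat{\tau}^{(s)}_{\alpha,-k}, \hat{\tau}^{(s)}_{\text{DM},k}) = (\hat{\tau}^{(s)}_{\alpha,-k}-\tau^{(s)})^2 - 2(\hat{\tau}^{(s)}_{\alpha,-k}-\tau^{(s)})(\hat{\tau}^{(s)}_{\text{DM},k}-\tau^{(s)}) + (\hat{\tau}^{(s)}_{\text{DM},k}-\tau^{(s)})^2 .
\end{equation*}
The three terms are handled as follows.
\begin{enumerate}
\item The last term does not involve $\alpha$. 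It is identical for $\beta$ and drops out of $\hat{\epsilon}^{(s)}_\beta-\hat{\epsilon}^{(s)}_\alpha$ exactly, not just in expectation.
\item For the cross term, condition on $S=s$ and on the random partition. Under Assumption~\ref{ass:non_overlap} and the random partition, units in $\mathcal{O}_{s,k}$ and $\mathcal{O}_{s,-k}$ are disjoint, and their outcomes and assignments are independent. So $\hat{\tau}_{\alpha,-k}$ and $\hat{\tau}_{\text{DM},k}$ are conditionally independent, and the expectation factorizes.
\item Assumption~\ref{ass:randomization} then gives $\mathbb{E}[\hat{\tau}^{(s)}_{\text{DM},k}\mid S=s]=\tau^{(s)}$. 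Assumption~\ref{ass:study_independence} ensures each fold targets the same study-level $\tau^{(s)}$. Hence the cross term has mean zero.
\item The first term has expectation equal to the MSE of $\alpha$ in study $s$.
\end{enumerate}
Averaging over $k$ and $s$ and taking $\mathbb{E}_S$ then yields $\mathbb{E}[\hat{\theta}_{MSE}]=\theta_{MSE}$.

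I expect two steps to be the main obstacles. The first is the bookkeeping in the last term. The expectation of $(\hat{\tau}_{\alpha,-k}-\tau^{(s)})^2$ is the MSE of $\alpha$ fit on a sample of size roughly $m_s(K-1)/K$, not on all of $\mathcal{O}_s$. I would make explicit that $\epsilon^{(s)}_\alpha$, and hence $\theta_{MSE}$, is to be read as the risk of the estimator at the training-fold sample size. Otherwise the statement is only asymptotically unbiased, with a discrepancy that vanishes as $K\to\infty$ or as $m_s$ grows.

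The second is making the fold-level DM exactly unbiased. This requires both arms to be present in each fold, so I would condition on that event, or assume stratified splitting within arm. I would also state that the independence used in the cross term is conditional on study membership. The expectation must be taken in the order: partition and assignment first, then $S$.
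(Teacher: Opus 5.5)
Your proposal is correct and follows the paper's proof essentially step for step: the same add-and-subtract expansion around $\tau^{(s)}$, the cross term eliminated by independence of the disjoint folds plus unbiasedness of $\hat{\tau}^{(s)}_{\text{DM},k}$, and the $(\hat{\tau}^{(s)}_{\text{DM},k}-\tau^{(s)})^2$ term cancelling between $\alpha$ and $\beta$ (the paper cancels it in expectation as $\epsilon^{(s)}_{DM}$, you cancel it pathwise, which is equivalent given the common partition). Your two caveats are legitimate refinements that the paper's proof leaves implicit: it equates $\mathbb{E}[(\hat{\tau}^{(s)}_{\alpha,-k}-\tau^{(s)})^2\mid S=s]$ with $\epsilon^{(s)}_\alpha$ without noting that $\hat{\tau}^{(s)}_{\alpha,-k}$ is fit only on $\mathcal{O}_{s,-k}$, and it takes fold-level DM unbiasedness for granted without requiring both arms in each fold; one small correction is that the disjointness of folds comes from the partition itself, not from A.3.
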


\begin{theorem}[Bias Bounds for Regret-Based Test Statistic]
\label{thm:regret_bias_bounds}
For the regret performance metric with threshold $c \geq 0$, the bias of the cross-fitted test statistic is bounded as: $|\hat{\theta}_{\text{regret}}(\alpha, \beta) - \theta_{\text{regret}}(\alpha, \beta)| \leq \frac{1}{N} \sum_{s=1}^N \left(B^{(s)}_\alpha + B^{(s)}_\beta\right),$
where for any estimator $\gamma$ and study $s$:
\begin{equation*}
B^{(s)}_\gamma = |\mathcal{O}_s| \times \mathbb{E}\left[|\tau^{(s)}| \times |\mathbf{1}[\tau^{(s)} > 0] - \mathbf{1}[\hat{\tau}^{(s)}_{\text{DM},k} > 0]| + |\tau^{(s)} - \hat{\tau}^{(s)}_{\text{DM},k}| \mid S = s\right].
\end{equation*}
\end{theorem}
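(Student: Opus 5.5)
The plan is to reduce the statement to a per-study, per-estimator comparison and then control that comparison by a triangle-inequality decomposition. The one interpretive choice is to read the left-hand side as the bias $|\mathbb{E}[\hat{\theta}_{\text{regret}}(\alpha,\beta)] - \theta_{\text{regret}}(\alpha,\beta)|$. This matches the expectation in $B^{(s)}_\gamma$ and the theorem's title.

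\textbf{Step 1 (reduction to one study and one estimator).} Write
\[
\hat{\theta}_{\text{regret}}(\alpha,\beta)-\theta_{\text{regret}}(\alpha,\beta)=\frac{1}{N}\sum_{s=1}^N\Bigl[(\hat{\epsilon}^{(s)}_\beta-\epsilon^{(s)}_\beta)-(\hat{\epsilon}^{(s)}_\alpha-\epsilon^{(s)}_\alpha)\Bigr].
\]
After taking expectations, the triangle inequality reduces the claim to showing $|\mathbb{E}[\hat{\epsilon}^{(s)}_\gamma-\epsilon^{(s)}_\gamma\mid S=s]|\le B^{(s)}_\gamma$ for each study $s$ and each $\gamma\in\{\alpha,\beta\}$. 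Since $\hat{\epsilon}^{(s)}_\gamma$ is an average over folds, it suffices to treat a single fold $k$. I take $\mathbb{E}[\Delta(\hat{\tau}^{(s)}_{\gamma,-k},\tau^{(s)})]$ as the stand-in for $\epsilon^{(s)}_\gamma$, up to the fold-size convention the paper implicitly adopts.

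\textbf{Step 2 (pointwise decomposition).} Set $D=\mathbf{1}[\hat{\tau}^{(s)}_{\gamma,-k}>c\,\mathrm{SE}]$, $\hat{\tau}=\hat{\tau}^{(s)}_{\text{DM},k}$ and $\tau=\tau^{(s)}$. The plug-in regret is computed with $\hat{\tau}$ in place of $\tau$, so the per-fold difference is
\[
|\mathcal{O}_s|\Bigl[\hat{\tau}\,\mathbf{1}[\hat{\tau}>0]-\tau\,\mathbf{1}[\tau>0]-(\hat{\tau}-\tau)D\Bigr].
\]
I would add and subtract $\tau\,\mathbf{1}[\hat{\tau}>0]$ to split this into
\[
\tau\bigl(\mathbf{1}[\hat{\tau}>0]-\mathbf{1}[\tau>0]\bigr)+(\hat{\tau}-\tau)\bigl(\mathbf{1}[\hat{\tau}>0]-D\bigr).
\]
The first term has absolute value at most $|\tau|\,|\mathbf{1}[\tau>0]-\mathbf{1}[\hat{\tau}>0]|$. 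The second is at most $|\hat{\tau}-\tau|$, because the difference of two indicators lies in $[-1,1]$.

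\textbf{Step 3 (taking expectations).} Move the absolute value inside the expectation with Jensen's inequality and multiply by $|\mathcal{O}_s|$; this gives exactly $B^{(s)}_\gamma$. Averaging over $k$ and over studies then completes the proof.

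A sharper bound is available for the second term. Cross-fitting makes $D$ independent of $\hat{\tau}$, and $\mathbb{E}[\hat{\tau}]=\tau$ by Assumption~\ref{ass:randomization}, so $\mathbb{E}[(\hat{\tau}-\tau)D]=0$. Using this would only make the bound tighter, so the stated cruder bound certainly holds.

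\textbf{Main obstacle.} The main difficulty is bookkeeping rather than analysis. One must check that the true regret and its plug-in share the same decision indicator $D$, computed from the same fold $-k$, so that they differ only through substituting $\hat{\tau}$ for $\tau$. This requires committing to the reading that $\epsilon^{(s)}_\gamma$ is evaluated at the $-k$ sample size. Once that is fixed, the rest is the elementary indicator algebra above.
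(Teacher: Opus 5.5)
Your proof is correct and is essentially the paper's argument: the same pointwise split of plug-in minus true regret (the paper's unexplained ``triangle inequality analysis,'' whose explicit add-and-subtract of $\tau^{(s)}\mathbf{1}[\hat{\tau}^{(s)}_{\text{DM},k}>0]$ appears in its proof of Theorem~\ref{thm:regret_expected_bias}), followed by taking expectations and then the triangle inequality over studies and over $\alpha,\beta$. Your reading of the left-hand side as $|\mathbb{E}[\hat{\theta}_{\text{regret}}(\alpha,\beta)]-\theta_{\text{regret}}(\alpha,\beta)|$ is the right one: the paper's ``taking expectations'' step only delivers that version, and the literal inequality, with a random left-hand side and a deterministic right-hand side, fails in general.
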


\begin{theorem}[Expected Bias of Regret-Based Test Statistic]
\label{thm:regret_expected_bias}
For the regret performance metric, the expected bias of the cross-fitted test statistic is:
\begin{equation}
\mathbb{E}[\hat{\theta}_{\text{regret}}(\alpha, \beta) - \theta_{\text{regret}}(\alpha, \beta)] = \frac{1}{N} \sum_{s=1}^N \left(B^{(s)}_\beta - B^{(s)}_\alpha\right),
\end{equation}
where for any estimator $\gamma$ and study $s$:
\begin{align}
B^{(s)}_\gamma &= |\mathcal{O}_s| \times \bigg[\mathbb{E}\left[(\hat{\tau}^{(s)}_{\text{DM},k} - \tau^{(s)}) \times \left(\mathbf{1}[\hat{\tau}^{(s)}_{\text{DM},k} > 0] - \mathbf{1}[\hat{\tau}^{(s)}_{\gamma,-k} > 0]\right) \mid S = s\right] 
\\
&\quad + \tau^{(s)} \times \left(P(\hat{\tau}^{(s)}_{\text{DM},k} > 0 \mid S = s) - \mathbf{1}[\tau^{(s)} > 0]\right)\bigg].
\end{align}
Furthermore, the second term satisfies:
\begin{equation}
\tau^{(s)} \times \left(P(\hat{\tau}^{(s)}_{\text{DM},k} > 0 \mid S = s) - \mathbf{1}[\tau^{(s)} > 0]\right) \leq 0 \quad \forall s.
\end{equation}
\end{theorem}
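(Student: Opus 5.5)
We work study by study, since $\hat{\theta}_{\text{regret}}-\theta_{\text{regret}} = \frac{1}{N}\sum_s \big[(\hat{\epsilon}^{(s)}_\beta-\epsilon^{(s)}_\beta)-(\hat{\epsilon}^{(s)}_\alpha-\epsilon^{(s)}_\alpha)\big]$. The whole claim then reduces to showing, for a fixed study $s$ and estimator $\gamma$, that $\mathbb{E}[\hat{\epsilon}^{(s)}_\gamma - \epsilon^{(s)}_\gamma \mid S=s]$ equals the stated $B^{(s)}_\gamma$ up to sign. Throughout we take $c=0$, which is the form in which $B^{(s)}_\gamma$ is written.

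Since the folds are exchangeable, it suffices to handle a single fold $k$. For notational ease we identify the evaluated quantity $\epsilon^{(s)}_\gamma$ with the regret of $\hat{\tau}^{(s)}_{\gamma,-k}$. Write $D=\hat{\tau}^{(s)}_{\text{DM},k}$, $G=\hat{\tau}^{(s)}_{\gamma,-k}$ and $\tau=\tau^{(s)}$. The cross-fitted term is $|\mathcal{O}_s|\, D\,(\mathbf{1}[D>0]-\mathbf{1}[G>0])$, and the target is $|\mathcal{O}_s|\,\tau\,(\mathbf{1}[\tau>0]-\mathbf{1}[G>0])$.

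The key algebraic step is to add and subtract $\tau\,\mathbf{1}[D>0]$ and $\tau\,\mathbf{1}[G>0]$:
\begin{equation*}
D(\mathbf{1}[D>0]-\mathbf{1}[G>0]) - \tau(\mathbf{1}[\tau>0]-\mathbf{1}[G>0]) = (D-\tau)(\mathbf{1}[D>0]-\mathbf{1}[G>0]) + \tau(\mathbf{1}[D>0]-\mathbf{1}[\tau>0]).
\end{equation*}
Taking expectations conditional on $S=s$, the first piece is exactly the first term of $B^{(s)}_\gamma$. The second piece becomes $\tau(P(D>0\mid S=s)-\mathbf{1}[\tau>0])$, because $\tau$ is fixed given the study. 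Multiplying by $|\mathcal{O}_s|$, averaging over folds and studies, and differencing $\beta$ against $\alpha$ gives the displayed identity. We must keep track of the sign convention, since $\hat{\theta}=\hat{M}_\beta-\hat{M}_\alpha$. Note that the first term does not vanish even though $\mathbb{E}[D]=\tau$ (Assumption A.1) and $D$ is independent of $G$ (Assumptions A.1 and A.3 plus disjoint folds): the indicator $\mathbf{1}[D>0]$ is correlated with $D-\tau$. This is why the $\mathbf{1}[D>0]$ part survives, while the $\mathbf{1}[G>0]$ part in fact has mean zero by independence. We leave the term in the combined form stated.

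For the sign claim we split into cases. If $\tau>0$, the second term is $\tau(P(D>0)-1)\le 0$. If $\tau\le 0$, it is $\tau\,P(D>0)\le 0$, since $\tau\le 0$ and the probability is nonnegative. There is no real obstacle here. The main care point is the bookkeeping: the fold index in $B^{(s)}_\gamma$ is implicitly averaged over folds, we must keep $c=0$ consistent with the definition of regret, and we must confirm that $\epsilon^{(s)}_\gamma$ is taken as the expected regret of the fold-$(-k)$ estimator rather than the full-sample one. If the full-sample version is intended, an additional discrepancy term would arise, and we would state that it is absorbed under this convention.
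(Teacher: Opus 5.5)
Your proposal is correct and follows the paper's proof essentially step for step. You use the same add-and-subtract identity splitting the plug-in minus true regret into $(D-\tau)(\mathbf{1}[D>0]-\mathbf{1}[G>0])+\tau(\mathbf{1}[D>0]-\mathbf{1}[\tau>0])$, then conditional expectation given $S=s$, linearity over folds and studies, and the same case split for the sign of the second term; your bookkeeping ($c=0$ to match the stated $B^{(s)}_\gamma$, and $\epsilon^{(s)}_\gamma$ read as the regret of the fold-$(-k)$ estimator) makes explicit conventions the paper leaves implicit, and your side remark that $\mathbb{E}[(D-\tau)\mathbf{1}[G>0]\mid S=s]=0$ under fold independence goes beyond the paper, since it makes $B^{(s)}_\gamma$ free of $\gamma$, so the displayed bias difference in fact vanishes.
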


\begin{corollary}[Asymptotic Unbiasedness]
Under regularity conditions ensuring that $\hat{\tau}^{(s)}_{\text{DM},k} \xrightarrow{p} \tau^{(s)}$ as $|\mathcal{O}_{s,k}| \to \infty$, we have
$B^{(s)}_\gamma \to 0 \text{ for all } s, \gamma,$
implying that $\hat{\theta}_{\text{regret}}(\alpha, \beta)$ is asymptotically unbiased for $\theta_{\text{regret}}(\alpha, \beta)$.
\end{corollary}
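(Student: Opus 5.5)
The corollary follows from Theorem~\ref{thm:regret_expected_bias}, taken as given. By that theorem, the expected bias of $\hat{\theta}_{\text{regret}}(\alpha,\beta)$ equals $\frac{1}{N}\sum_{s}(B^{(s)}_\beta - B^{(s)}_\alpha)$. Since $N$ is fixed, it suffices to show $B^{(s)}_\gamma \to 0$ for each $s$ and each $\gamma\in\{\alpha,\beta\}$.

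I treat $|\mathcal{O}_s|$ as a fixed scale factor while the fold size grows. Equivalently, I interpret the claim as being about the normalized quantity $B^{(s)}_\gamma/|\mathcal{O}_s|$; otherwise the multiplicative factor could offset the decay. The task then reduces to showing that the two bracketed terms in Theorem~\ref{thm:regret_expected_bias} both vanish.

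\textbf{First term.} The indicator difference $\mathbf{1}[\hat{\tau}^{(s)}_{\text{DM},k} > 0] - \mathbf{1}[\hat{\tau}^{(s)}_{\gamma,-k} > 0]$ is bounded by $1$ in absolute value. Hence
\[
\left|\mathbb{E}\left[(\hat{\tau}^{(s)}_{\text{DM},k} - \tau^{(s)})\left(\mathbf{1}[\hat{\tau}^{(s)}_{\text{DM},k} > 0] - \mathbf{1}[\hat{\tau}^{(s)}_{\gamma,-k} > 0]\right)\mid S=s\right]\right| \le \mathbb{E}\left[|\hat{\tau}^{(s)}_{\text{DM},k} - \tau^{(s)}| \mid S=s\right].
\]
Convergence in probability alone does not give $L^1$ convergence. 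This is the main obstacle, and it is where the regularity conditions are needed. I would take them to include uniform integrability of $\hat{\tau}^{(s)}_{\text{DM},k}$, for instance a uniformly bounded second moment of the outcomes in each arm. Under that assumption the variance of the difference-in-means is $O(1/|\mathcal{O}_{s,k}|)$. Jensen's inequality then gives $\mathbb{E}|\hat{\tau}^{(s)}_{\text{DM},k}-\tau^{(s)}| \le \sqrt{\mathrm{Var}(\hat{\tau}^{(s)}_{\text{DM},k})}\to 0$.

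\textbf{Second term.} This term is $\tau^{(s)}\big(P(\hat{\tau}^{(s)}_{\text{DM},k}>0)-\mathbf{1}[\tau^{(s)}>0]\big)$, and I split by the sign of $\tau^{(s)}$.
\begin{itemize}
\item If $\tau^{(s)}=0$, the term is identically zero.
\item If $\tau^{(s)}>0$, consistency gives $P(\hat{\tau}^{(s)}_{\text{DM},k}\le 0)\le P(|\hat{\tau}^{(s)}_{\text{DM},k}-\tau^{(s)}|\ge \tau^{(s)})\to 0$, so the term tends to zero.
\item If $\tau^{(s)}<0$, the same argument gives $P(\hat{\tau}^{(s)}_{\text{DM},k}>0)\to 0$.
\end{itemize}
Since $\tau^{(s)}$ is a fixed finite number, the product tends to $0$ in every case. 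This step needs only convergence in probability. One could also add a Chebyshev rate, but no rate is required here.

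Combining the two terms gives $B^{(s)}_\gamma\to 0$ for every $s$ and $\gamma$. A finite average of vanishing terms vanishes, so the expected bias in Theorem~\ref{thm:regret_expected_bias} tends to zero, which is the asymptotic unbiasedness claimed. The only delicate points are the upgrade from convergence in probability to $L^1$ convergence, and the handling of the $|\mathcal{O}_s|$ scaling. I would make both explicit as part of the "regularity conditions" in the statement.
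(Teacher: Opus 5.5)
Your argument follows the paper's own route and makes it rigorous. The paper supports the corollary only by remarking (item 4 after the proof of Theorem~\ref{thm:regret_expected_bias}) that both terms of $B^{(s)}_\gamma$ vanish as the DM estimate becomes precise. Your uniform-integrability upgrade for the first term and your sign split for the second are exactly what that remark leaves out.

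The step that does not hold up is your handling of $|\mathcal{O}_s|$. Holding $|\mathcal{O}_s|$ fixed while $|\mathcal{O}_{s,k}|\to\infty$ is impossible, because $\mathcal{O}_{s,k}\subseteq\mathcal{O}_s$; indeed $|\mathcal{O}_s|\approx K|\mathcal{O}_{s,k}|$. So the two readings you call equivalent are not. Only the normalized reading is coherent, and it is genuinely necessary, not a convenience.

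To see this, take a null study ($\tau^{(s)}=0$). The disjoint-fold independence and unbiasedness used in the proof of Theorem~\ref{thm:unbiased_estimator} eliminate the $\hat{\tau}^{(s)}_{\gamma,-k}$ part of the first term. This leaves $B^{(s)}_\gamma=|\mathcal{O}_s|\,\mathbb{E}[\max(\hat{\tau}^{(s)}_{\text{DM},k},0)\mid S=s]$, which for non-degenerate outcomes grows like $\sqrt{|\mathcal{O}_s|}$. Hence the corollary's claim $B^{(s)}_\gamma\to 0$ is false as stated. Your closing sentence also overreaches: you have only shown that each study contributes $o(|\mathcal{O}_s|)$ to the bias, i.e., the bias is negligible relative to the scale of the regret.

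The unnormalized conclusion about $\hat{\theta}_{\text{regret}}$ is nevertheless true, for a simpler reason that bypasses $B^{(s)}_\gamma\to 0$ entirely. By the same independence and unbiasedness, $\mathbb{E}[(\hat{\tau}^{(s)}_{\text{DM},k}-\tau^{(s)})\mathbf{1}[\hat{\tau}^{(s)}_{\gamma,-k}>0]\mid S=s]=0$. The remaining parts of $B^{(s)}_\gamma$ do not depend on $\gamma$. Therefore $B^{(s)}_\beta-B^{(s)}_\alpha=0$, and within the framework of Theorem~\ref{thm:regret_expected_bias} the statistic is exactly unbiased. The underlying reason is that a difference of two regrets is linear in $\tau^{(s)}$, since the $\mathbf{1}[\tau^{(s)}>0]$ terms cancel. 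No consistency or uniform-integrability assumption is needed for this.
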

The bound $B^{(s)}_\gamma$ has an intuitive interpretation: (i) the first term captures the cost of sign errors when $\hat{\tau}^{(s)}_{\text{DM},k}$ and $\tau^{(s)}$ have opposite signs, and (ii) the second term captures the cost of magnitude errors in the DM estimator. 

\subsection{Statistical Inference}
Establishing asymptotic normality for $\hat{\theta}(\alpha,\beta)$ faces several complications: the asymptotic behavior depends critically on the smoothness of the performance metric $\Delta$ (our regret metric involves indicator functions that create discontinuities), our cross-fitting procedure introduces two-stage estimation complexity, and study heterogeneity in sample sizes and effect magnitudes can lead to non-identically distributed performance differences requiring specialized limit theorems. Additionally, multiple comparisons across estimators require family-wise error rate corrections that further complicate CLT-based inference.

\textbf{Advantages of Permutation-Based Inference.} Permutation tests elegantly sidestep these technical challenges by leveraging the fundamental exchangeability property under the null hypothesis $H_0: \theta(\alpha,\beta) = 0$, where estimator labels can be randomly reassigned without changing the distribution of performance differences. 

\textbf{Permutation Test Framework.} Under the null hypothesis $H_0: \theta(\alpha,\beta) = 0$ (equal performance), estimators $\alpha$ and $\beta$ have identical expected performance across studies. This implies that the assignment of estimator labels to the observed performance differences is exchangeable under $H_0$.

Our test procedure works as follows:
\begin{enumerate}
\item \textit{Observed Test Statistic}: Compute the observed difference in performance:
\begin{equation}
\hat{\theta}_{\text{obs}} = \hat{M}_\beta - \hat{M}_\alpha = \frac{1}{N} \sum_{s=1}^N \left(\hat{\epsilon}^{(s)}_\beta - \hat{\epsilon}^{(s)}_\alpha\right)
\end{equation}

\item \textit{Permutation Distribution}: For each of the $2^N$ possible ways to assign estimator labels $\{\alpha, \beta\}$ to the pair of performance estimates $\{(\hat{\epsilon}^{(s)}_\alpha, \hat{\epsilon}^{(s)}_\beta)\}_{s=1}^N$, compute the corresponding test statistic. Let $\pi \in \Pi$ denote a permutation where $\Pi$ is the set of all possible label assignments.

\item \textit{Permuted Test Statistics}: For permutation $\pi$, define:
\begin{equation}
\hat{\theta}_\pi = \frac{1}{N} \sum_{s=1}^N \left(\hat{\epsilon}^{(s)}_{\pi(\beta),s} - \hat{\epsilon}^{(s)}_{\pi(\alpha),s}\right)
\end{equation}
where $\pi(\alpha)$ and $\pi(\beta)$ represent the permuted estimator assignments for study $s$.

\item \textit{P-value Computation}: The two-sided p-value is:
\begin{equation}
p = \frac{1}{|\Pi|} \sum_{\pi \in \Pi} \mathbf{1}[|\hat{\theta}_\pi| \geq |\hat{\theta}_{\text{obs}}|]
\end{equation}
\end{enumerate}

% \textbf{Computational Implementation.} Since $2^N$ can be computationally prohibitive for large $N$, we use Monte Carlo approximation by randomly sampling $B$ permutations (typically $B = 10,000$):

% \begin{equation}
% \hat{p} = \frac{1}{B} \sum_{b=1}^B \mathbf{1}[|\hat{\theta}_{\pi_b}| \geq |\hat{\theta}_{\text{obs}}|]
% \end{equation}
% where $\{\pi_b\}_{b=1}^B$ are randomly sampled permutations.

\subsection{Weighted Metrics for Robustness}

While our cross-fitting approach provides consistent identification of dominant estimators asymptotically, finite-sample performance may be sensitive to outlier studies with unusually large outcome variability. Such high variability can arise from implementation issues, design problems, or genuinely different populations that may not be representative of the broader family of studies under consideration. To address this concern while maintaining statistical rigor, we extend our framework to incorporate study-specific weights that reflect the reliability and representativeness of each study's contribution to our estimator comparison.

We modify our overall performance metric to $M_{\alpha,w} = \mathbb{E}_S[w^{(s)} \cdot \epsilon^{(s)}_\alpha]$, where $w^{(s)} \geq 0$ is a study-specific weight. A natural weighting scheme assigns higher weight to studies with more precise treatment effect estimates through inverse-variance weighting: $w^{(s)} = (\text{Var}(\hat{\tau}^{(s)}_{\text{DM}}))^{-1}$. This approach has several desirable properties: studies with larger samples receive higher weight reflecting their greater statistical power, studies with lower outcome variance receive higher weight as they provide more precise estimates, and studies with severe implementation issues reflected in extremely high variance are automatically down-weighted.

To implement weighted comparisons within our cross-fitting framework, we estimate the weights using the full data in each study. For study $s$, we estimate $\hat{w}^{(s)}$ using the sample variance. The weighted cross-fitted performance estimate becomes $\hat{\epsilon}^{(s)}_{\alpha,w} = \frac{\hat{w}^{(s)}}{K} \sum_{k=1}^K  \Delta(\hat{\tau}^{(s)}_{\alpha,-k}, \hat{\tau}^{(s)}_{\text{DM},k})$, and the weighted test statistic is $\hat{\theta}_w(\alpha,\beta) = \frac{1}{N} \sum_{s=1}^N (\hat{\epsilon}^{(s)}_{\beta,w} - \hat{\epsilon}^{(s)}_{\alpha,w})$.

The weighted approach provides a principled middle ground between purely rank-based comparison (counting only the number of studies where one estimator outperforms another, which ignores magnitude of performance differences) and unweighted comparison (treating all studies equally, which can be dominated by high-variance outliers). The inverse-variance weighting naturally balances these concerns by giving appropriate weight to each study's contribution based on its precision, while still incorporating information from all studies in the family.

% \subsection{Algorithm for K-Estimator Comparison and Selection}

% To systematically identify the optimal estimator across multiple candidates, we implement a tournament-style approach where estimators compete in pairwise comparisons, with winners advancing to subsequent rounds until determining the overall best-performing estimator. Rather than comparing all possible pairs (requiring $k(k-1)/2$ comparisons for $k$ estimators), we implement a more efficient knockout tournament approach requiring only $k-1$ comparisons. This procedure is summarized in Algorithm \ref{alg:estimator_tournament} in the Appendix.

% An alternative approach for multiple pairwise comparisons would be the Copeland/Borda counting method, which tallies the number of wins for each estimator \cite{saari1996copeland}. However, our tournament approach offers two key advantages: computational efficiency (requiring only $k-1$ comparisons rather than $k(k-1)/2$) and a convenient visualization structure for presenting results, as shown in Figure~\ref{fig:tournament}.

% The knockout tournament ensures we identify the best overall performer efficiently while maintaining a clear record of how estimators performed against their competition, providing valuable insights beyond simply identifying the winner.
        \section{Amazon Supply Chain Optimization Technologies Experiments}\label{sec:amazon}

We demonstrate our framework through analysis of randomized controlled trials conducted at Amazon to evaluate supply chain optimization policies. These experiments provide an ideal testbed for comparing treatment effect estimators across a large family of related studies with shared outcome measures and analytical objectives. 

\subsection{Data Description}
Our analysis leverages a sample of 556 RCTs conducted between 2016 and 2025 across diverse geographies, marketplaces, and product categories. Most trials compare a single treatment against control; for multi-arm trials, we construct pairwise comparisons between each treatment and control. While there is a diversity of treatments under study, the common theme for them is that they implement a change to Amazon’s supply chain network. Importantly, these changes are evaluated independently through randomized experiments. The data considered in this study include RCTs with independent units and treatments, satisfying all of the requirements of the methodology presented in this paper. For outcome variable we focus on a common financial outcome. However, the method is readily applicable for any outcome variable for MSE estimation and for any profit-related metric in terms of our regret derivations.
% In the Appendix we include (largely unchanged) results on a larger set of experiments where some units may belong to more than one experiment  --- thus, violating assumption~\ref{ass:non_overlap} --- showcasing the robustness of the methodology.  
 
%\paragraph*{Outcome Variable.} We focus on Long Term Free Cash Flow (LTFCF), a composite metric capturing: (i) profit, (ii) customer lifetime value from availability and delivery speed, and (iii) opportunity cost of holding inventory. 

\paragraph*{Estimators.} We evaluate four base estimators: (1) Difference of Means (DM), (2) Ordinary Least Squares with covariate adjustment (OLS), (3) Linear T-learner with separate models per treatment arm, and (4) Weighted Least Squares (WLS) with weights $w_i = 1/(Y_{\text{pre},i} - \bar{Y}_{\text{pre}})^{2}$ that emphasize units near the pre-treatment mean. Given the heavy-tailed nature of this financial metric, we combine each base estimator with winsorization. Particularly, we consider following seven levels of winsorization: $p \in \{0, 0.001, 0.005, 0.01, 0.025, 0.05, 0.1\}$. We denote combinations as \textit{Estimator\_Level} (e.g., "DM\_0.01" for difference-of-means with 1\% winsorization). This yields 28 total estimator variants.

\paragraph*{Evaluation Metrics and Objectives.} Amazon's experimentation serves two objectives: (i) launch decisions and (ii) uplift measurement. For launch decisions, we use the regret metric $\Delta_{\text{Regret}}$ to identify estimators minimizing opportunity costs from false positives (launching ineffective policies) and false negatives (missing beneficial policies). For uplift measurement, we use Mean Squared Error ($\Delta_{\text{MSE}}$) to prioritize accurate and precise effect estimation. This dual evaluation demonstrates how optimal estimation strategies depend on analytical objectives.

\subsection{Results and Analysis}

Our analysis reveals substantial differences in estimator performance across evaluation objectives, confirming that optimal strategies are context-dependent. Figures~\ref{fig:amazon_mse} and \ref{fig:amazon_regret} present results for MSE and regret-based evaluation.

\paragraph*{MSE-Based Performance for Inference.} For mean squared error (Figure~\ref{fig:amazon_mse}), Linear T-learner with 0.5\% winsorization (Linear-T\_0.005) achieves minimum MSE. However, optimality varies by winsorization level: at levels below 0.5\%, OLS becomes optimal. MSE displays the expected smooth, convex shape with a global minimum reflecting the bias-variance tradeoff. As winsorization increases, performance differences between DM, OLS, and Linear T-learner diminish, suggesting that winsorization's variance reduction eliminates the advantage of covariate adjustment -- though at the cost of increased bias.

\paragraph*{Regret-Based Performance for Decision-Making.} Regret-based evaluation\footnote{We normalize regret so that the value at critical threshold 1.96 for DM\_0 equals 1.} (Figure~\ref{fig:amazon_regret}) reveals a strikingly different ranking. Difference-of-means without winsorization (DM\_0) achieves best performance, while Linear-T\_0.005, the estimator with minimum MSE, performs suboptimally here.

This reversal illuminates a fundamental insight: \textit{estimators minimizing estimator error need not minimize decision costs.} The DM estimator's unbiased, symmetric sampling distribution better aligns with minimizing decision errors despite higher variance. Linear-T\_0 ranks second for regret, with OLS a close third, suggesting these covariate-adjusted methods offer reasonable compromises when objectives are mixed or uncertain.

\paragraph*{Practical Implications.} These results highlight that principled estimator selection requires explicit consideration of analytical objectives. For Amazon's context, we recommend DM\_0 for launch decisions and Linear-T\_0.005 for uplift measurement, with T-Learner\_0 as a reasonable default when objectives are uncertain. The performance reversals between metrics underscore why ad hoc estimator selection may yield suboptimal outcomes in practice.

\begin{figure}
    \centering
    \includegraphics[width=\linewidth]{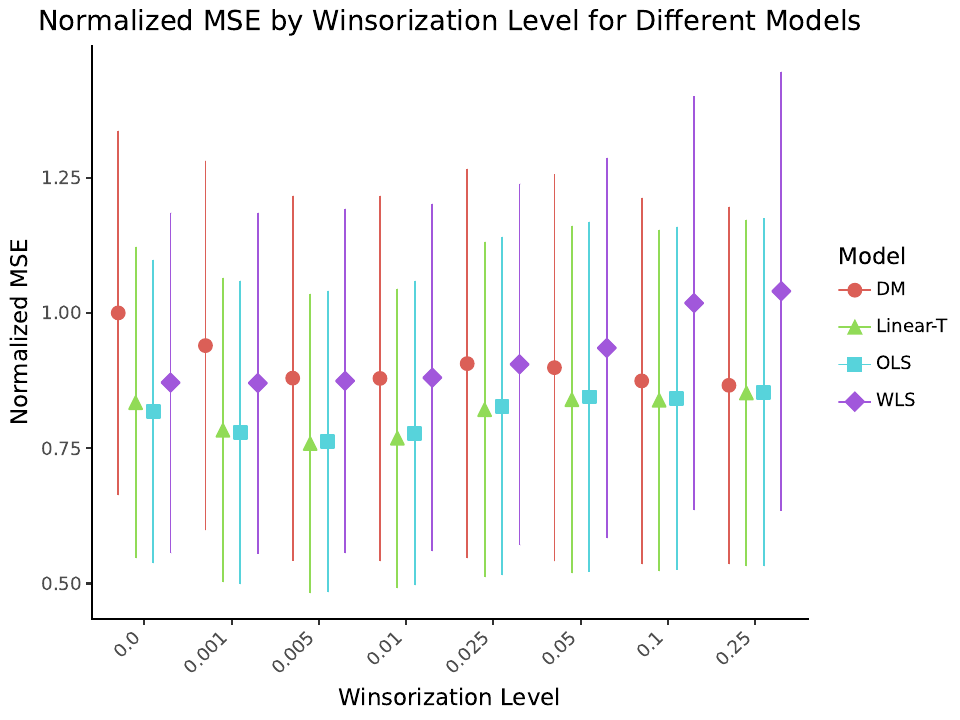}
    \caption{Mean Squared Error performance across treatment effect estimators in Amazon supply chain experiments as winsorization level vary. Values normalized to DM with no winsorization.}
    \label{fig:amazon_mse}
\end{figure}

\begin{figure}
    \centering
    \includegraphics[width=0.9\linewidth]{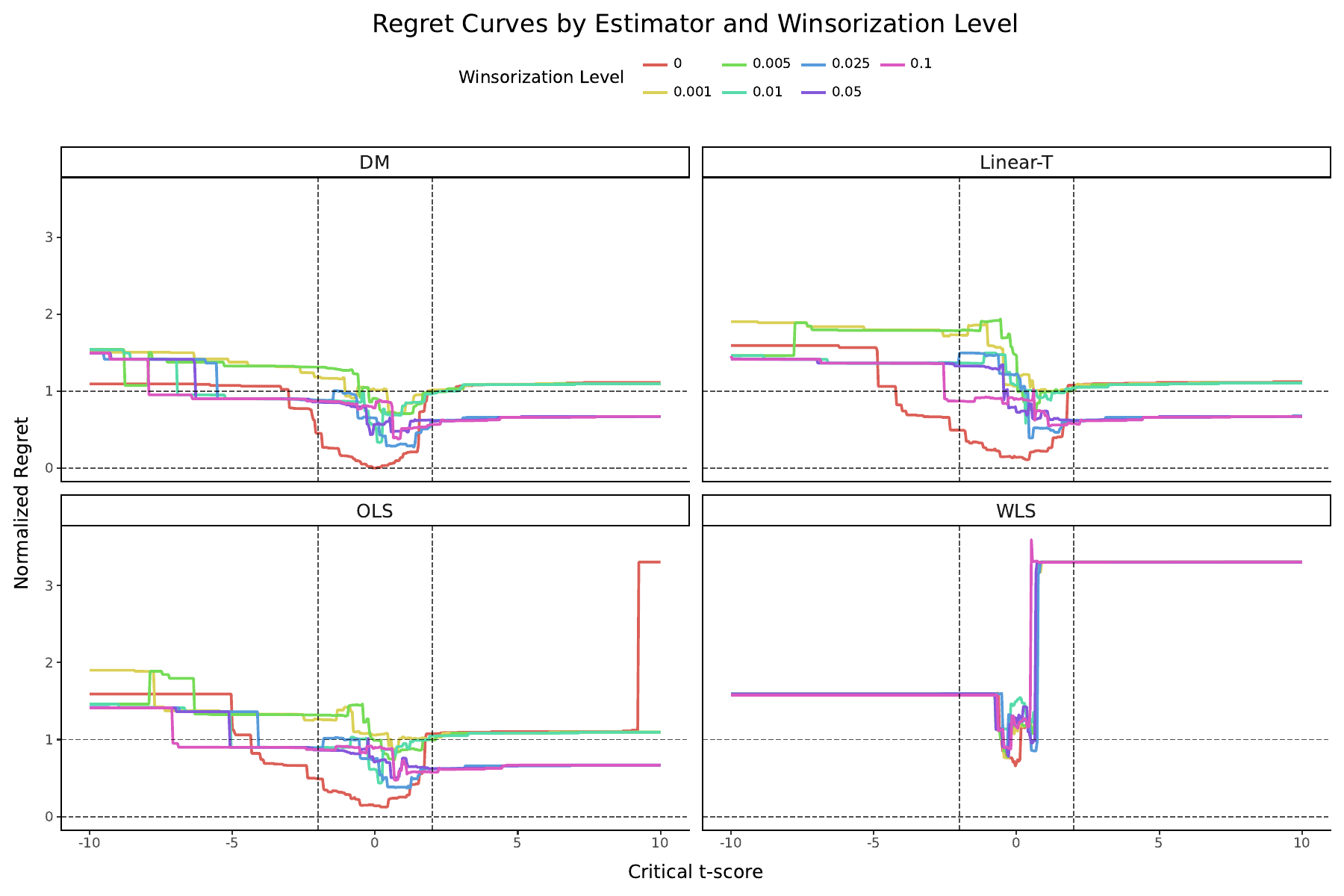}\\
    (a)\\
    \includegraphics[width=0.8\linewidth]{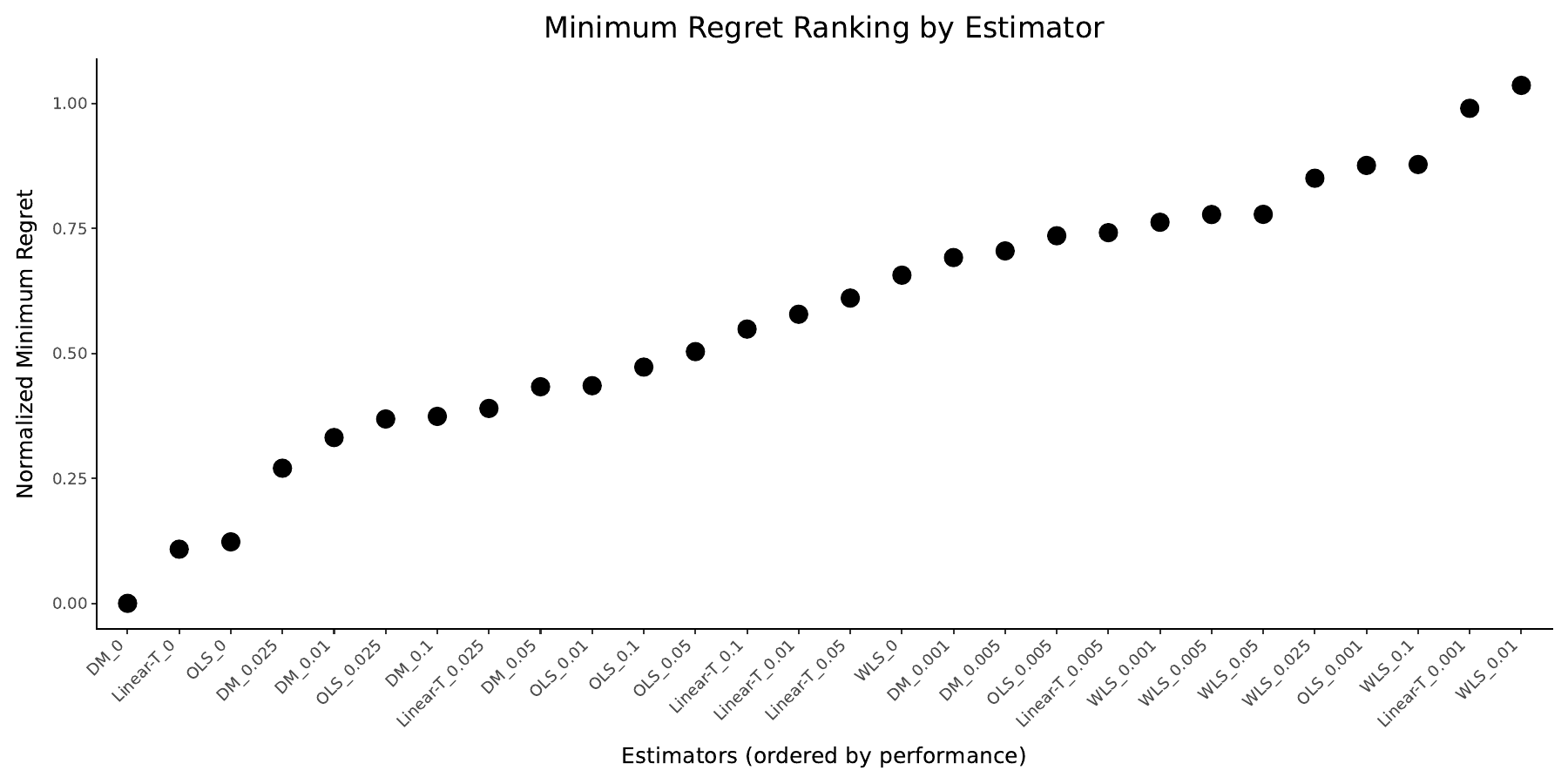}\\(b)
    \caption{(a) Minimum ``regret'' performance across treatment effect estimators in Amazon supply chain experiments. (b) Normalized ``regret'' for different values of critical t-score $c$ across estimators.}
    \label{fig:amazon_regret}
\end{figure}
        \section{Strengthening Democracy Challenge}
\label{sec:case_study}
Next, we apply our framework to the publicly available Strengthening Democracy Challenge (SDC) data. Here, the primary goal is inferential: understanding intervention effects on democratic attitudes. We use mean squared error (MSE) as our evaluation metric ($\Delta$).

\subsection{Data Description}

The SDC provides an ideal testbed featuring 25 interventions targeting similar outcomes with consistent baseline covariates \citep{voelkel2024mega}. The experiment recruited 32,059 U.S. adults in January 2022 via an online platform, randomly assigning approximately 1,200 participants to each of 25 treatment conditions or control. Interventions ranged from perspective-taking exercises to media literacy training, each designed to be brief (2-8 minutes) and scalable. The control condition exposed participants to an unrelated article.

\paragraph*{Outcome Measures.} We focus on two constructs: Support for Partisan Violence (SPV) measured through three 7-point scale items ($SPV_1$: threatening messages, $SPV_2$: online harassment), and Partisan Animosity (PA) measured through feeling thermometers toward Republicans ($PA\text{-}Fth_{Rep}$) plus a Dictator Game allocation ($PA_{DG}$). Each intervention-control comparison constitutes a separate RCT, yielding 25 RCTs per outcome for systematic estimator evaluation.

\subsection{Results and Analysis}

Our analysis reveals substantial variation in optimal estimators across outcomes and evaluation metrics, reinforcing that estimator selection must align with both data characteristics and analytical objectives.

\paragraph*{MSE-Based Performance for Inference.} Figure~\ref{fig:mse_sdc} displays mean squared error across estimators and winsorization levels. For $SPV_1$ and $SPV_2$, all estimators achieve best performance at high winsorization levels (around 10\%), with OLS\_0.1 emerging as optimal. Conditional on winsorization level, estimators perform nearly identically, suggesting weak associations between outcomes and pre-treatment features, low treatment effect heterogeneity, and heavy-tailed distributions characteristic of social science outcomes.

In contrast, $PA\text{-}Fth_{Rep}$ shows difference-in-means performing substantially worse than covariate-adjusted estimators (IPW, OLS, linear T-learner), with winsorization providing minimal benefit even at 10\%. This pattern indicates stronger covariate-outcome relationships where adjustment methods yield precision gains. For $PA_{DG}$, difference-in-means performs comparably to other estimators, but substantial winsorization (5-10\%) significantly improves performance across all methods, revealing outcome-specific sensitivity to extreme values.

These patterns demonstrate that optimal estimators and hyperparameters—including winsorization levels—depend critically on outcome characteristics. Heavy-tailed outcomes with weak covariate relationships benefit from aggressive winsorization, while outcomes with strong covariate associations gain more from adjustment methods than from trimming extremes.

\paragraph*{Regret-Based Performance for Decision-Making.} When evaluated using regret (Figures~\ref{fig:regret_sdc} and \ref{fig:min_regret_sdc}), estimator differences largely disappear. Across all outcomes—$SPV_1$, $SPV_2$, $PA\text{-}Fth_{Rep}$, and $PA_{DG}$—estimators perform nearly identically, with DM\_0 achieving minimum normalized regret.

Figure~\ref{fig:regret_sdc} reveals two distinct regret curve patterns. For $SPV_1$ and $SPV_2$, symmetric regret curves with minima near zero t-critical indicate that optimal rollout decisions require only positive effect estimates. In contrast, $PA\text{-}Fth_{Rep}$ and $PA_{DG}$ exhibit sigmoid-shaped curves: low t-critical thresholds impose minimal costs (suggesting many positive treatment effects), while high t-critical values incur substantial regret from failing to implement beneficial treatments.

\paragraph*{Implications.} The dramatic divergence between MSE and regret rankings reinforces our framework's core insight: optimal estimation strategies depend on analytical objectives. For inferential goals with these outcomes, covariate adjustment with outcome-specific winsorization minimizes MSE. For rollout decisions, simple difference-in-means without winsorization minimizes opportunity costs. Organizations conducting experiments for both purposes must explicitly prioritize objectives or accept compromise solutions that perform reasonably—though not optimally—across metrics.

\begin{figure}[h]
    \centering
    \begin{tabular}{cc}
         \includegraphics[width=0.45\linewidth]{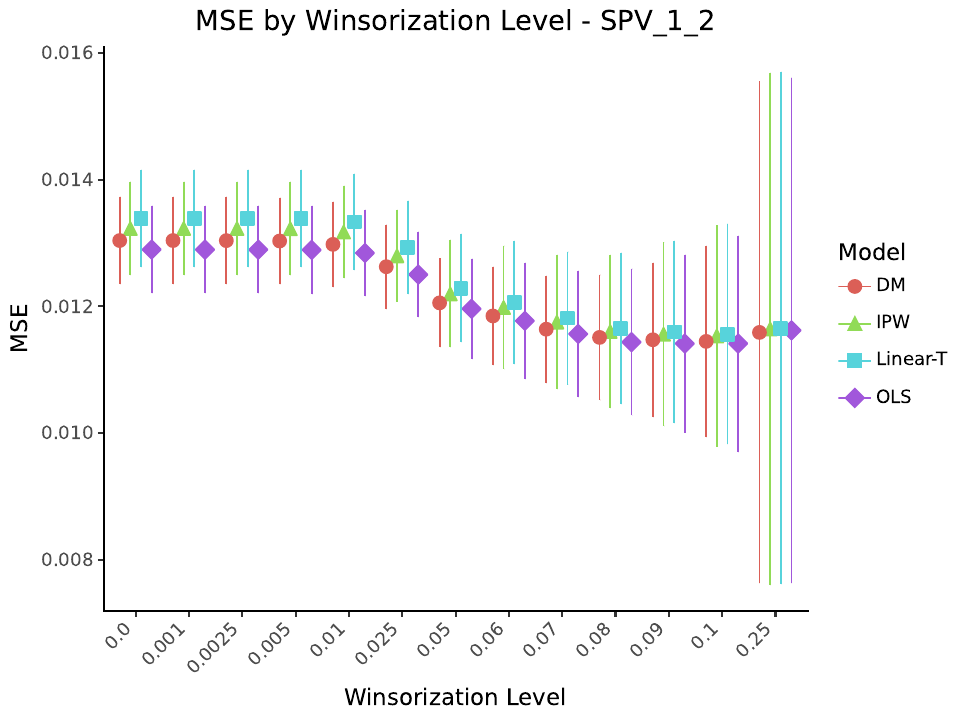} & \includegraphics[width=0.45\linewidth]{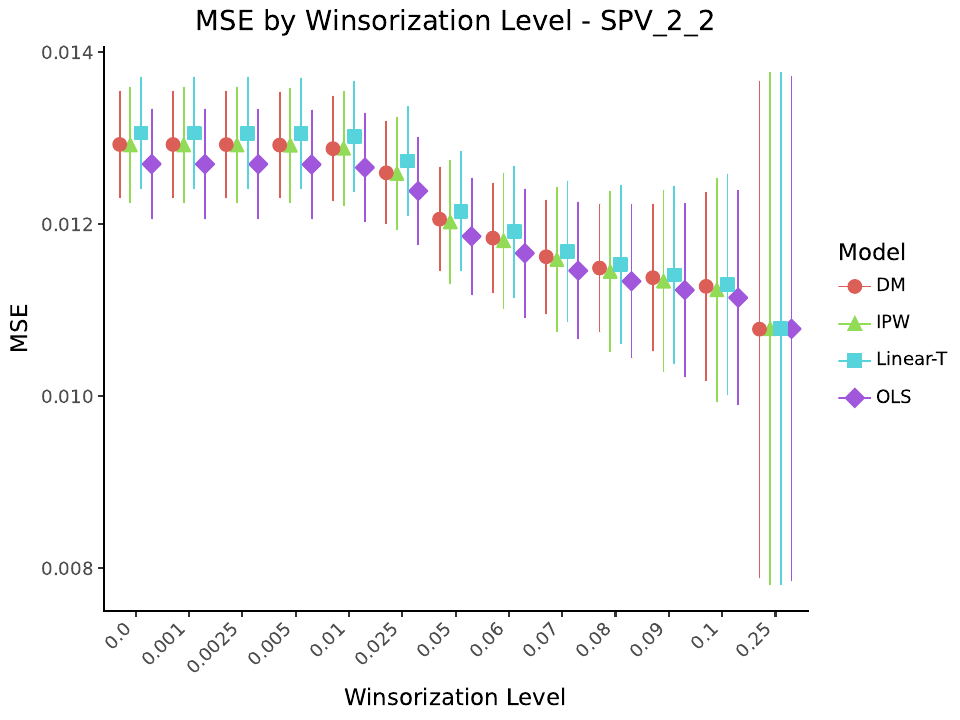} 
         \\
         (a) $SPV_1$ & (b) $SPV_2$ 
         \\
         \includegraphics[width=0.45\linewidth]{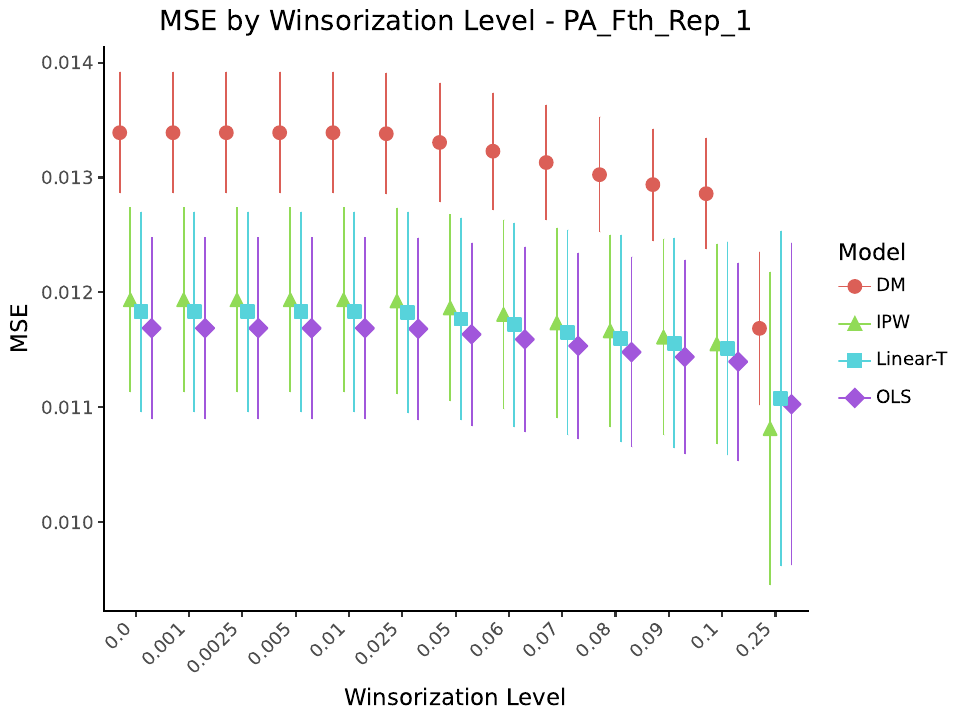} &
         \includegraphics[width=0.45\linewidth]{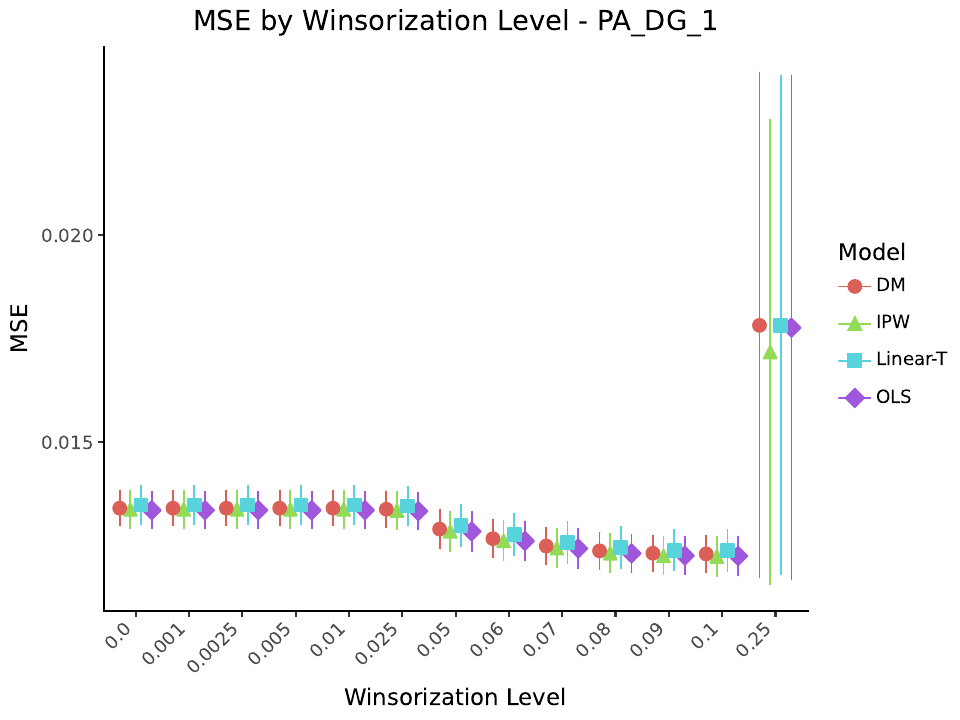}\\
         (c) $PA\text{-}Fth_{Rep}$ & (d) $PA_{DG}$
    \end{tabular}
    \caption{Mean squared error and standard deviation for six estimators across outcomes. Lower values indicate better performance.}
    \label{fig:mse_sdc}
\end{figure}

\begin{figure}[h]
    \centering
    \begin{tabular}{cc}
         \includegraphics[width=0.45\linewidth]{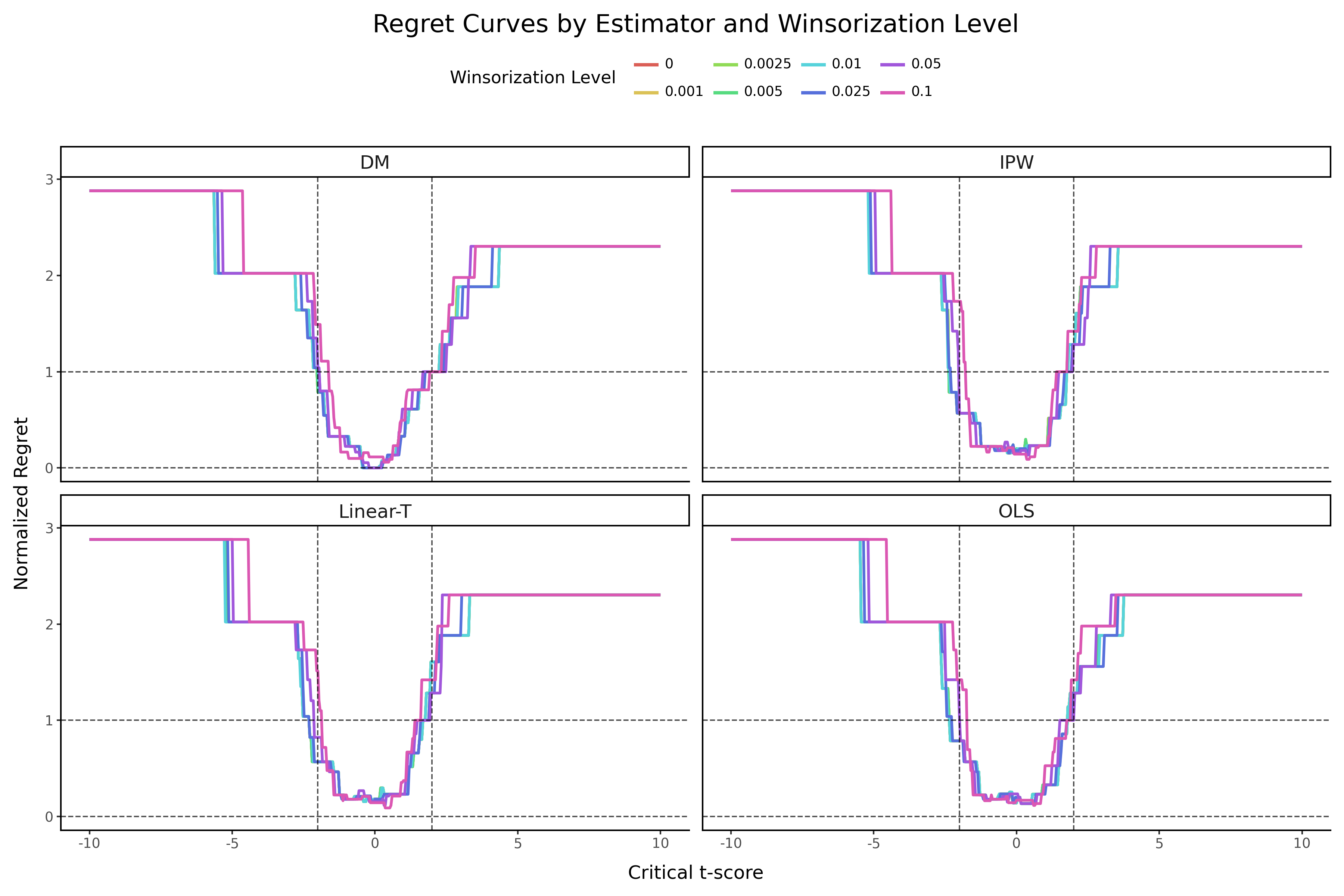} & \includegraphics[width=0.45\linewidth]{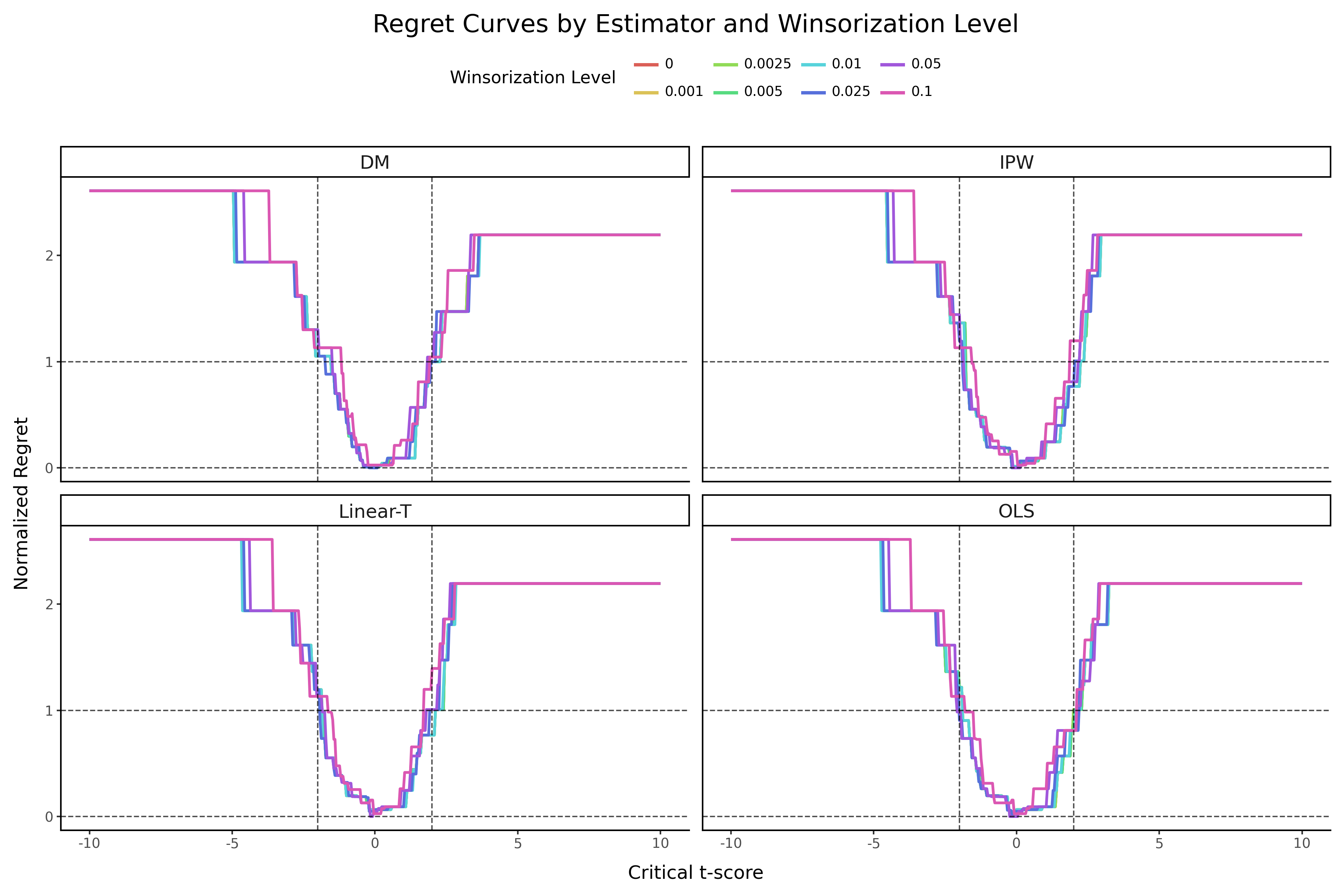}
         \\
         (a) $SPV_1$ & (b) $SPV_2$
         \\
         \includegraphics[width=0.45\linewidth]{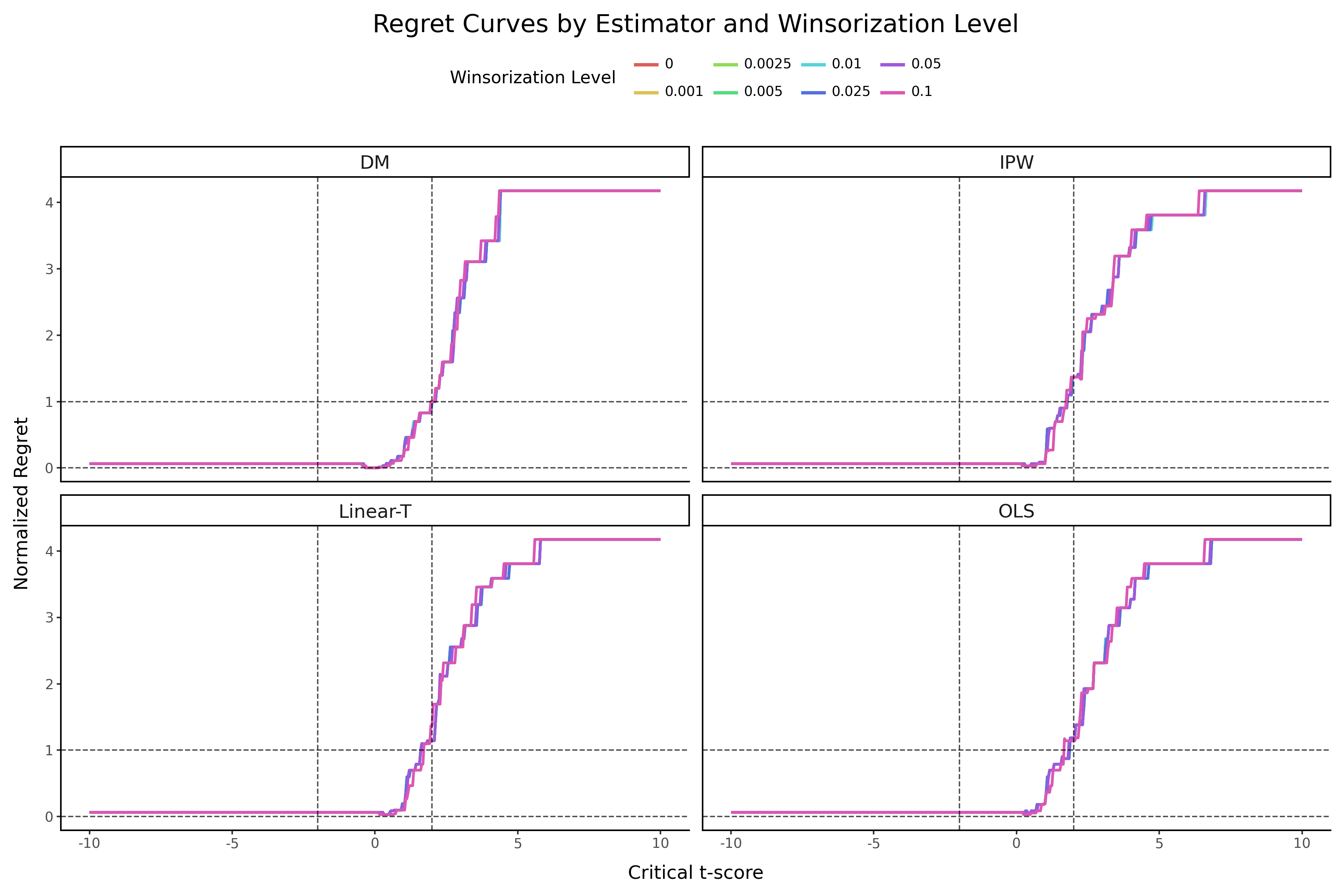} &
         \includegraphics[width=0.45\linewidth]{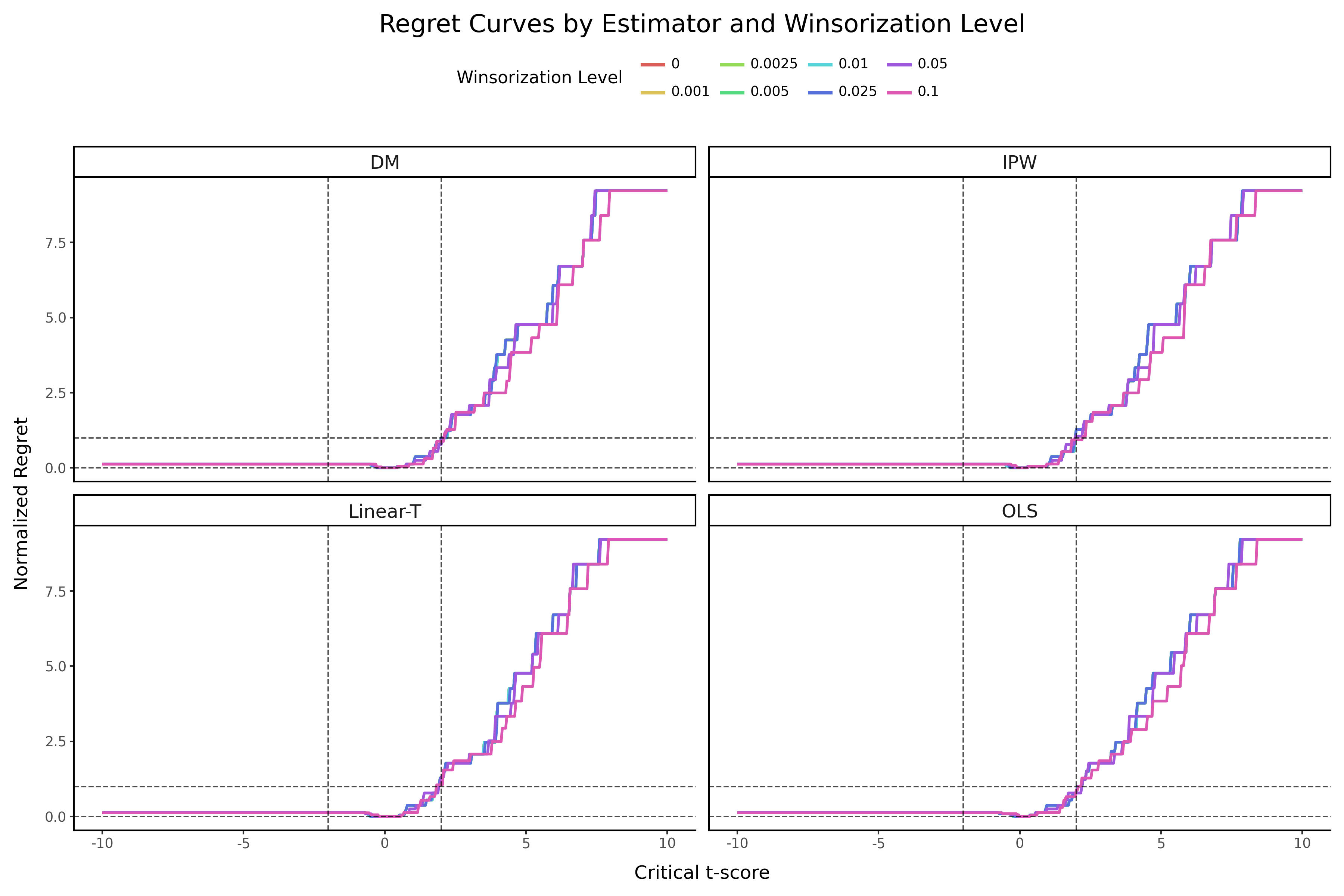}\\
         (c) $PA\text{-}Fth_{Rep}$ & (d) $PA_{DG}$
    \end{tabular}
    \caption{Regret Curves estimators as a function of critical-t across outcomes. Lower values indicate better performance.}
    \label{fig:regret_sdc}
\end{figure}

\begin{figure}[h]
    \centering
    \begin{tabular}{cc}
         \includegraphics[width=0.45\linewidth]{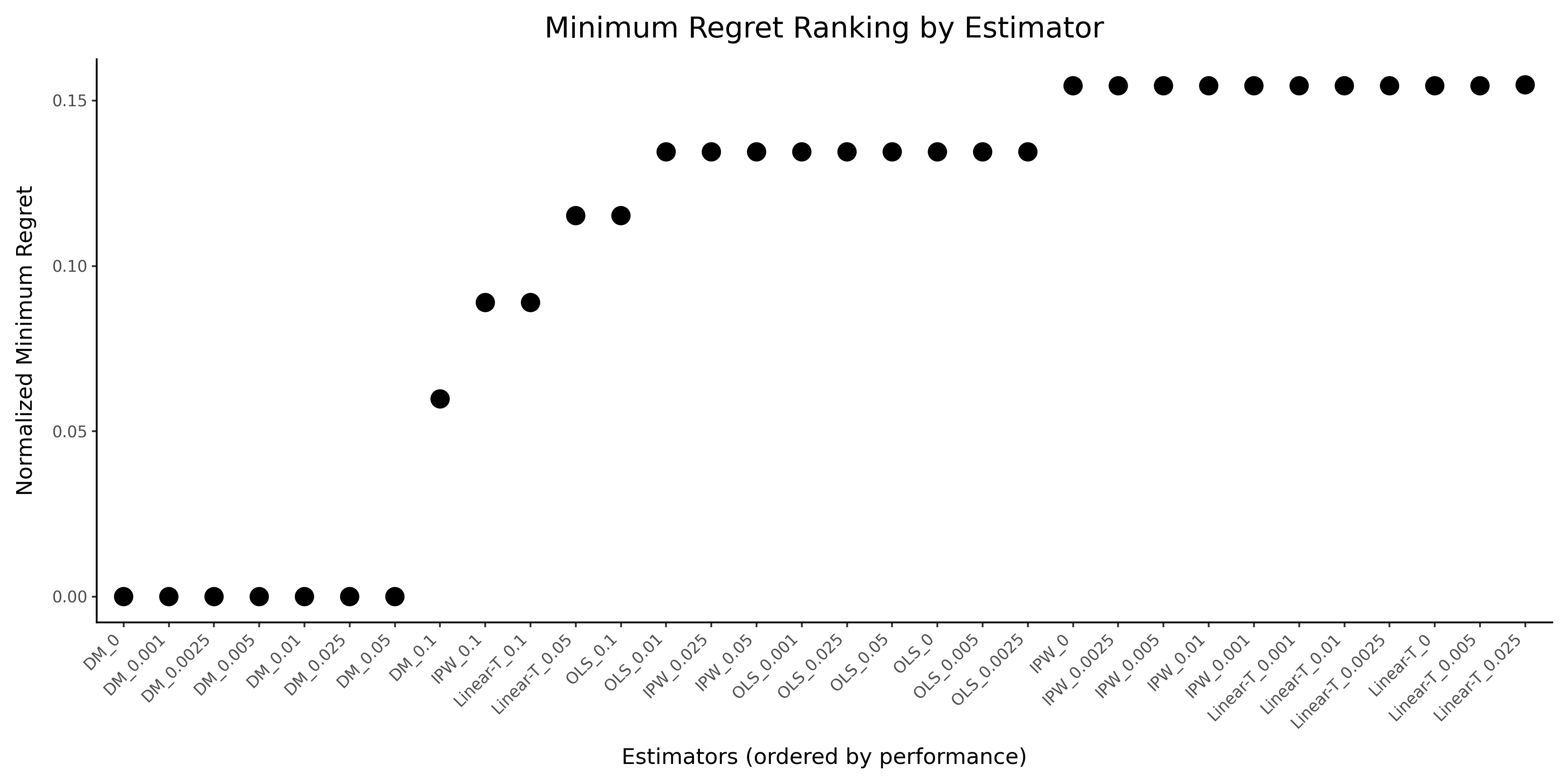} & \includegraphics[width=0.45\linewidth]{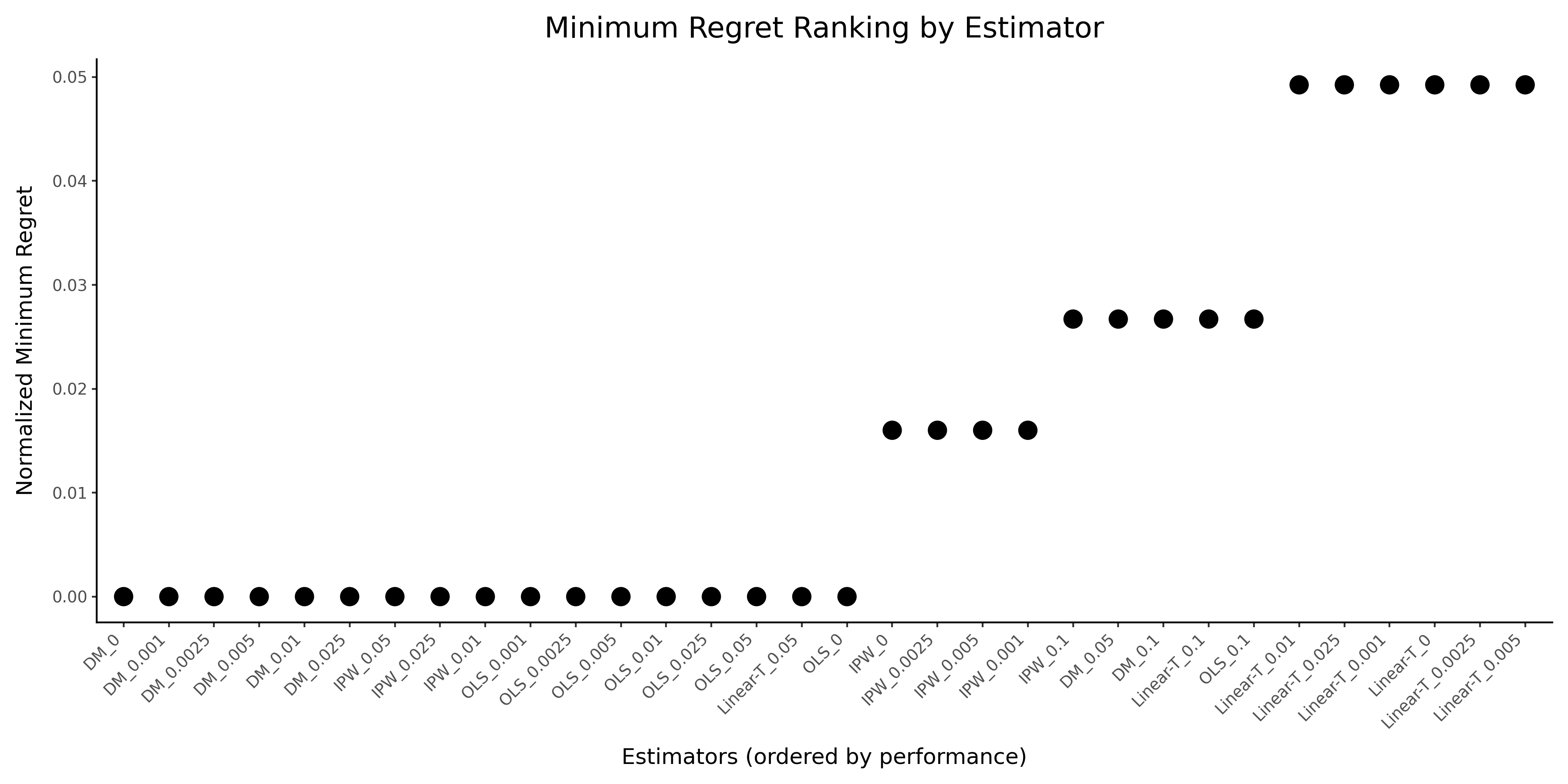}
         \\
         (a) $SPV_1$ & (b) $SPV_2$
         \\
         \includegraphics[width=0.45\linewidth]{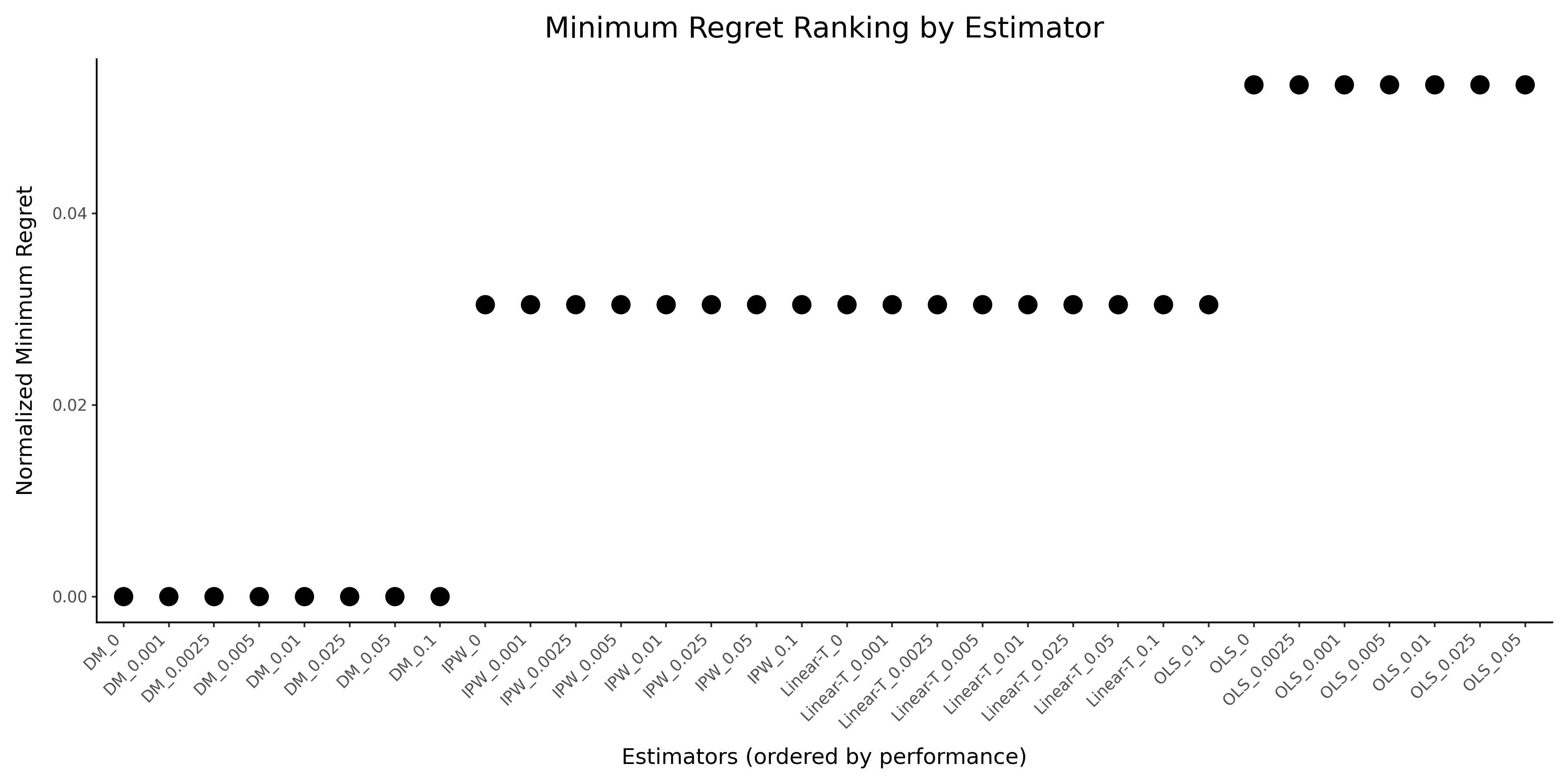} &
         \includegraphics[width=0.45\linewidth]{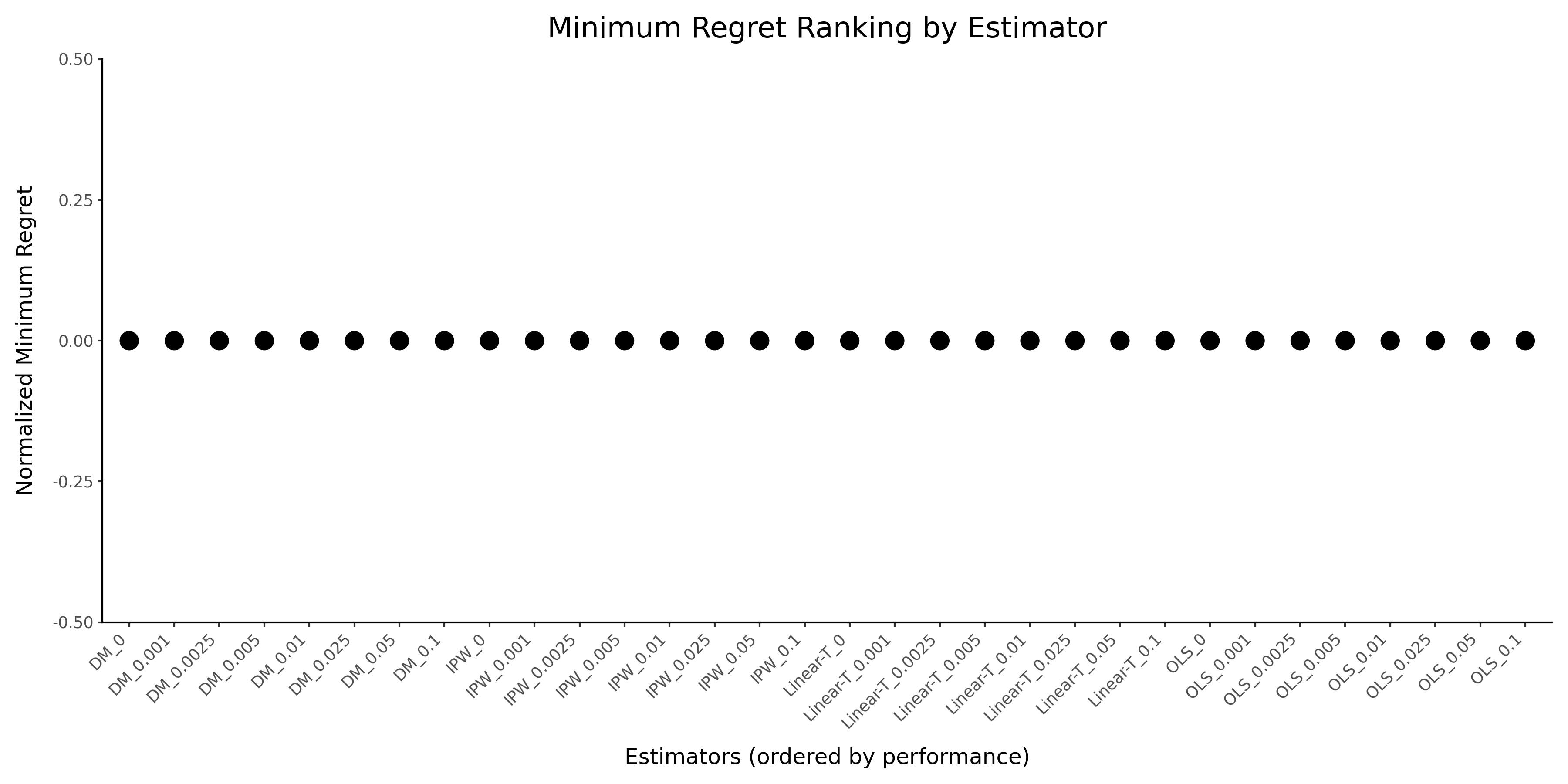}\\
         (c) $PA\text{-}Fth_{Rep}$ & (d) $PA_{DG}$
    \end{tabular}
    \caption{Minimum regret achieved by each estimator across outcomes.}
    \label{fig:min_regret_sdc}
\end{figure}

	%!TEX root = main.tex
\section{Conclusion and Discussion}
\label{sec:conclusion}

This paper addresses a key challenge in experimental design and causal inference: how to systematically select treatment effect estimators when faced with multiple reasonable methodological choices. 

Our work makes several key contributions to the methodological literature on causal inference and experimental design. We develop a principled framework for estimator comparison that avoids the pitfalls of both ad hoc selection. We also demonstrate that optimal estimator selection depends critically on the intended use of treatment effect estimates. Through our Amazon supply chain and SDC case studies, we show that estimators optimized for inferential precision (measured by MSE) can perform poorly for decision-making applications (measured by regret), and vice versa. 

Our findings juxtapose several prevailing assumptions in the experimental literature. The performance reversals we observe between MSE and regret evaluation suggest that the community's focus on p-values may not always translate to optimal decision-making performance. This disconnect is particularly relevant for applications where experiments are primarily conducted to inform operational decisions rather than to estimate parameters for scientific understanding. Our framework also highlights the importance of domain-specific methodological validation. We suggest that organizations and research communities would benefit from developing domain-specific best practices rather than relying solely on general-purpose methodological recommendations.

More broadly, our work contributes to the growing recognition that causal inference methods must be tailored to their intended applications. Just as the choice between randomized trials and observational studies depends on feasibility and ethical considerations, the choice of estimation methods within randomized trials should depend on analytical objectives and domain characteristics.

Several limitations of our current approach suggest important directions for future research. First, our framework currently focuses on treatment effect estimation for categorical treatment in cross-sectional experiment. Extensions to time varying treatment experiments, factorial designs, and more complex treatment structures would broaden the applicability of our approach. Similarly, incorporating methods for handling interference, non-compliance, and other complications common in field experiments would enhance practical relevance.

Second, while our permutation-based inference approach provides robust statistical validity, it may be computationally intensive for very large families of experiments. Developing more efficient computational approaches or approximation methods could facilitate adoption in settings with thousands of experiments.

Third, our current framework treats the choice of evaluation metric (MSE versus regret) as given, but in practice, determining the appropriate metric for a given application may itself be challenging. Future work could develop principled approaches for metric selection or multi-objective optimization approaches that balance multiple evaluation criteria.

Based on our findings, we offer several practical recommendations for researchers and practitioners conducting experimental programs:

Organizations conducting experiments primarily for decision-making should prioritize regret-based evaluation over MSE-based evaluation when selecting estimation methods. Simple methods like difference-of-means may outperform sophisticated alternatives when the goal is minimizing regreet rather than achieving precise parameter estimates.

Researchers should explicitly articulate whether their primary objective is scientific inference or practical application, as this choice should influence methodological decisions. When both objectives are important, methods like OLS that perform reasonably well across evaluation criteria may provide a sensible compromise.

The substantial performance differences we observe across evaluation criteria suggest that new methods should be evaluated using metrics aligned with their intended applications. Method developers should consider multiple evaluation criteria and acknowledge trade-offs between different objectives.

Organizations operating large-scale experimentation platforms should consider implementing our framework to develop domain-specific best practices rather than relying on one-size-fits-all recommendations. Regular validation exercises using accumulated experimental data can inform evolving methodological guidelines.

As experimental methods continue to evolve and new techniques emerge, frameworks like ours will become increasingly important for ensuring that methodological sophistication translates into improved decision-making and scientific understanding. The ultimate goal is not to find the single best method, but to develop principled approaches for selecting methods that are well-suited to specific research contexts and analytical objectives.

        \bibliographystyle{apalike}
        \bibliography{biblio.bib}
        \newpage
        \appendix
        \begin{appendix}
            \section{Treatment Effect Estimators}
\label{sec:estimators}

In this section, overview some commonly used treatment effect estimators. For clarity, we present estimators for binary treatment comparison ($T_a = 1$ versus $T_b = 0$) within a single RCT, with straightforward extension to arbitrary treatment comparisons across studies.

\subsection{Core Estimators and Covariate Adjustment}

\textbf{Difference-of-Means (DM).} The most fundamental approach directly compares sample averages between treatment and control groups:
\begin{equation}
\hat{\tau}_{DM} = \frac{1}{m_{s,1}}\sum_{i:T_i=1}Y_i - \frac{1}{m_{s,0}}\sum_{i:T_i=0}Y_i,
\end{equation}
where $m_{s,1}$ and $m_{s,0}$ are the treatment and control group sizes. While unbiased under randomization, DM can exhibit high variance and fails to leverage pretreatment covariate information.

\textbf{S-Learner and T-Learner.} Machine learning approaches model pretreatment covariates using flexible functional forms. S-learners fit a single model including treatment as a feature: $\hat{Y} = \hat{f}(X, T)$, with treatment effect estimated as $\hat{\tau}_{S} = \hat{f}(X, 1) - \hat{f}(X, 0)$. T-learners fit separate models for each treatment group: $\hat{\mu}_1(X) = \hat{f}_1(X)$ and $\hat{\mu}_0(X) = \hat{f}_0(X)$, with treatment effect $\hat{\tau}_{T} = \hat{\mu}_1(X) - \hat{\mu}_0(X)$. We implement both linear regression and random forest versions. T-learners allow different functional forms between treatment groups, better capturing heterogeneity.

\textbf{Inverse Probability Weighting (IPW) and Augmented IPW.} IPW incorporates covariate information through estimated propensity scores:
\begin{equation}
\hat{\tau}_{IPW} = \frac{1}{m_s}\sum_{i=1}^{m_s} \left(\frac{T_iY_i}{\hat{e}(X_i)} - \frac{(1-T_i)Y_i}{1-\hat{e}(X_i)}\right),
\end{equation}
where $\hat{e}(X_i)$ is the estimated propensity score. Augmented IPW (AIPW) combines IPW with outcome regression for double robustness \citep{chernozhukov2018double}:
\begin{equation}
\hat{\tau}_{AIPW} = \frac{1}{m_s}\sum_{i=1}^{m_s} \left[\frac{T_i(Y_i-\hat{\mu}_1(X_i))}{\hat{e}(X_i)} + \hat{\mu}_1(X_i) - \frac{(1-T_i)(Y_i-\hat{\mu}_0(X_i))}{1-\hat{e}(X_i)} - \hat{\mu}_0(X_i)\right],
\end{equation}
where $\hat{\mu}_t(X_i)$ are estimated outcome models for each treatment group.

\textbf{Causal Forest.} Causal forests extend random forests to target treatment effect heterogeneity by maximizing treatment effect variance across terminal nodes rather than prediction accuracy \citep{wager2018estimation}. They provide a data-driven approach for discovering heterogeneous effects but require larger sample sizes.

\subsection{Variance Reduction Techniques}

The estimators described above are often combined with additional techniques designed to reduce variance and improve precision. These techniques can be applied to any of the core estimators as complementary approaches.

\textbf{Winsorization.} Winsorization addresses heavy-tailed outcome distributions by capping extreme values at specified percentiles, typically the 1st and 99th percentiles. For outcome $Y_i$, winsorized values are:
\begin{equation}
\tilde{Y}_i = \begin{cases}
Q_{0.01} & \text{if } Y_i < Q_{0.01} \\
Y_i & \text{if } Q_{0.01} \leq Y_i \leq Q_{0.99} \\
Q_{0.99} & \text{if } Y_i > Q_{0.99}
\end{cases}
\end{equation}
where $Q_p$ denotes the $p$-th quantile. Any estimator can then be applied to the winsorized outcomes $\{\tilde{Y}_i\}$. While winsorization reduces variance by limiting the influence of extreme values, it introduces bias by changing the estimand from the true population effect to the effect on the winsorized distribution.

\textbf{Weighted Average Treatment Effects.} Rather than estimating the standard average treatment effect, estimators can be repurposed to target weighted average treatment effects where the weight for each unit $i$ is chosen as a function of covariates $w_i = w(X_i)$. The weighted estimand becomes:
\begin{equation}
\tau_w = \frac{\mathbb{E}[w(X_i) \cdot Y_i(1)]}{\mathbb{E}[w(X_i)]} - \frac{\mathbb{E}[w(X_i) \cdot Y_i(0)]}{\mathbb{E}[w(X_i)]}.
\end{equation}
The weights $w(X_i)$ are designed to minimize the uncertainty of the estimate, often by upweighting units with more precise treatment effect estimates or units in regions of covariate space with better overlap between treatment groups. This approach trades off some bias (by changing the estimand) for substantial variance reduction, which can be particularly valuable in decision-making contexts where minimizing estimation uncertainty is paramount.

            \section{Proofs for Theorems in Section~\ref{sec:val}}
\textbf{\textit{Proof~Theorem~\ref{thm:unbiased_estimator}.}}
We need to show that $\mathbb{E}[\hat{\theta}_{MSE}(\alpha, \beta)] = \theta_{MSE}(\alpha, \beta)$, which is equivalent to showing that $\mathbb{E}[\hat{\epsilon}^{(s)}_\alpha] = \epsilon^{(s)}_\alpha + \epsilon^{(s)}_{DM}$ for any estimator $\alpha$ and study $s$, where the additional $\epsilon^{(s)}_{DM}$ term cancels out in the difference $\hat{\theta}_{MSE}(\alpha, \beta) = \hat{M}_\beta - \hat{M}_\alpha$.

For the MSE metric, our cross-fitted performance estimate uses:
\begin{equation}
\Delta_{MSE}(\hat{\tau}^{(s)}_{\alpha,-k}, \hat{\tau}^{(s)}_{\text{DM},k}) = (\hat{\tau}^{(s)}_{\alpha,-k} - \hat{\tau}^{(s)}_{\text{DM},k})^2.
\end{equation}

Expanding this squared difference by adding and subtracting the true treatment effect $\tau^{(s)}$:
\begin{align}
(\hat{\tau}^{(s)}_{\alpha,-k} - \hat{\tau}^{(s)}_{\text{DM},k})^2 &= (\hat{\tau}^{(s)}_{\alpha,-k} - \tau^{(s)} + \tau^{(s)} - \hat{\tau}^{(s)}_{\text{DM},k})^2 \\
&= (\hat{\tau}^{(s)}_{\alpha,-k} - \tau^{(s)})^2 + (\tau^{(s)} - \hat{\tau}^{(s)}_{\text{DM},k})^2 \\
&\quad + 2(\hat{\tau}^{(s)}_{\alpha,-k} - \tau^{(s)})(\tau^{(s)} - \hat{\tau}^{(s)}_{\text{DM},k}).
\end{align}

Taking conditional expectations given $S = s$:
\begin{align}
&\mathbb{E}[\Delta_{MSE}(\hat{\tau}^{(s)}_{\alpha,-k}, \hat{\tau}^{(s)}_{\text{DM},k}) \mid S = s] \\
&= \mathbb{E}[(\hat{\tau}^{(s)}_{\alpha,-k} - \tau^{(s)})^2 \mid S = s] + \mathbb{E}[(\tau^{(s)} - \hat{\tau}^{(s)}_{\text{DM},k})^2 \mid S = s] \\
&\quad + 2\mathbb{E}[(\hat{\tau}^{(s)}_{\alpha,-k} - \tau^{(s)})(\tau^{(s)} - \hat{\tau}^{(s)}_{\text{DM},k}) \mid S = s].
\end{align}

The first term equals $\mathbb{E}[\Delta_{MSE}(\hat{\tau}^{(s)}_{\alpha,-k}, \tau^{(s)}) \mid S = s]$, and the second term equals $\mathbb{E}[\Delta_{MSE}(\hat{\tau}^{(s)}_{\text{DM},k}, \tau^{(s)}) \mid S = s]$.

For the cross term, since our cross-fitting procedure ensures that $\hat{\tau}^{(s)}_{\alpha,-k}$ and $\hat{\tau}^{(s)}_{\text{DM},k}$ are computed on disjoint data subsets, they are conditionally independent given $S = s$ and $\tau^{(s)}$. Since $\hat{\tau}^{(s)}_{\text{DM},k}$ is unbiased ($\mathbb{E}[\hat{\tau}^{(s)}_{\text{DM},k} \mid S = s] = \tau^{(s)}$), we have:
\begin{align}
&\mathbb{E}[(\hat{\tau}^{(s)}_{\alpha,-k} - \tau^{(s)})(\tau^{(s)} - \hat{\tau}^{(s)}_{\text{DM},k}) \mid S = s] \\
&= \mathbb{E}[\hat{\tau}^{(s)}_{\alpha,-k} - \tau^{(s)} \mid S = s] \cdot \mathbb{E}[\tau^{(s)} - \hat{\tau}^{(s)}_{\text{DM},k} \mid S = s] = 0.
\end{align}

Therefore:
\begin{equation}
\mathbb{E}[\hat{\epsilon}^{(s)}_\alpha] = \frac{1}{K} \sum_{k=1}^K \mathbb{E}[\Delta_{MSE}(\hat{\tau}^{(s)}_{\alpha,-k}, \hat{\tau}^{(s)}_{\text{DM},k}) \mid S = s] = \epsilon^{(s)}_\alpha + \epsilon^{(s)}_{DM}.
\end{equation}

Similarly, $\mathbb{E}[\hat{\epsilon}^{(s)}_\beta] = \epsilon^{(s)}_\beta + \epsilon^{(s)}_{DM}$.

Taking the difference:
\begin{equation}
\mathbb{E}[\hat{\epsilon}^{(s)}_\beta - \hat{\epsilon}^{(s)}_\alpha] = (\epsilon^{(s)}_\beta + \epsilon^{(s)}_{DM}) - (\epsilon^{(s)}_\alpha + \epsilon^{(s)}_{DM}) = \epsilon^{(s)}_\beta - \epsilon^{(s)}_\alpha.
\end{equation}

Finally:
\begin{align}
\mathbb{E}[\hat{\theta}_{MSE}(\alpha, \beta)] &= \mathbb{E}\left[\frac{1}{N} \sum_{s=1}^N (\hat{\epsilon}^{(s)}_\beta - \hat{\epsilon}^{(s)}_\alpha)\right] \\
&= \frac{1}{N} \sum_{s=1}^N (\epsilon^{(s)}_\beta - \epsilon^{(s)}_\alpha) = M_\beta - M_\alpha = \theta_{MSE}(\alpha, \beta).
\end{align}

\textbf{\textit{Proof of Theorem~\ref{thm:regret_bias_bounds}.}}
For any estimator $\gamma$ and study $s$, define:
\begin{align}
R^{(s)}_{\gamma,\text{true}} &= |\mathcal{O}_s| \times \left| \tau^{(s)} \times \left(\mathbf{1}[\tau^{(s)} > 0] - \mathbf{1}[\hat{\tau}^{(s)}_{\gamma,-k} > c \cdot \text{SE}(\hat{\tau}^{(s)}_{\gamma,-k})]\right) \right|, \\
R^{(s)}_{\gamma,\text{est}} &= |\mathcal{O}_s| \times \left| \hat{\tau}^{(s)}_{\text{DM},k} \times \left(\mathbf{1}[\hat{\tau}^{(s)}_{\text{DM},k} > 0] - \mathbf{1}[\hat{\tau}^{(s)}_{\gamma,-k} > c \cdot \text{SE}(\hat{\tau}^{(s)}_{\gamma,-k})]\right) \right|.
\end{align}

From our triangle inequality analysis, we have:
\begin{equation}
|R^{(s)}_{\gamma,\text{est}} - R^{(s)}_{\gamma,\text{true}}| \leq |\mathcal{O}_s| \times \left(|\tau^{(s)}| \times |\mathbf{1}[\tau^{(s)} > 0] - \mathbf{1}[\hat{\tau}^{(s)}_{\text{DM},k} > 0]| + |\tau^{(s)} - \hat{\tau}^{(s)}_{\text{DM},k}|\right).
\end{equation}

Taking expectations conditional on $S = s$ and averaging over folds:
\begin{equation}
|\hat{\epsilon}^{(s)}_\gamma - \epsilon^{(s)}_\gamma| \leq B^{(s)}_\gamma.
\end{equation}

For the overall performance estimates:
\begin{align}
|\hat{M}_\gamma - M_\gamma| &= \left|\frac{1}{N} \sum_{s=1}^N (\hat{\epsilon}^{(s)}_\gamma - \epsilon^{(s)}_\gamma)\right| \\
&\leq \frac{1}{N} \sum_{s=1}^N |\hat{\epsilon}^{(s)}_\gamma - \epsilon^{(s)}_\gamma| \\
&\leq \frac{1}{N} \sum_{s=1}^N B^{(s)}_\gamma.
\end{align}

Finally, for the test statistic:
\begin{align}
|\hat{\theta}_{\text{regret}}(\alpha, \beta) - \theta_{\text{regret}}(\alpha, \beta)| &= |(\hat{M}_\beta - M_\beta) - (\hat{M}_\alpha - M_\alpha)| \\
&\leq |\hat{M}_\beta - M_\beta| + |\hat{M}_\alpha - M_\alpha| \\
&\leq \frac{1}{N} \sum_{s=1}^N (B^{(s)}_\beta + B^{(s)}_\alpha).
\end{align}

\textbf{\textit{Proof for Theorem~\ref{thm:regret_expected_bias}.}}
For any estimator $\gamma$ and study $s$, the difference between estimated and true regret is:
\begin{align}
&\Delta_{\text{Regret}}(\hat{\tau}^{(s)}_{\gamma,-k}, \hat{\tau}^{(s)}_{\text{DM},k}, c) - \Delta_{\text{Regret}}(\hat{\tau}^{(s)}_{\gamma,-k}, \tau^{(s)}, c) \\
&= |\mathcal{O}_s| \times \left[(\hat{\tau}^{(s)}_{\text{DM},k} - \tau^{(s)}) \times \left(\mathbf{1}[\hat{\tau}^{(s)}_{\text{DM},k} > 0] - \mathbf{1}[\hat{\tau}^{(s)}_{\gamma,-k} > c \cdot \text{SE}]\right)\right. \\
&\quad \left.+ \tau^{(s)} \times \left(\mathbf{1}[\hat{\tau}^{(s)}_{\text{DM},k} > 0] - \mathbf{1}[\tau^{(s)} > 0]\right)\right].
\end{align}

Taking expectations conditional on $S = s$:
\begin{equation}
\mathbb{E}[\hat{\epsilon}^{(s)}_\gamma - \epsilon^{(s)}_\gamma] = B^{(s)}_\gamma.
\end{equation}

For the sign of the second term:
\begin{itemize}
\item If $\tau^{(s)} > 0$: The term becomes $\tau^{(s)} \times (P(\hat{\tau}^{(s)}_{\text{DM},k} > 0 \mid S = s) - 1) \leq 0$
\item If $\tau^{(s)} < 0$: The term becomes $\tau^{(s)} \times P(\hat{\tau}^{(s)}_{\text{DM},k} > 0 \mid S = s) \leq 0$
\item If $\tau^{(s)} = 0$: The term equals $0$
\end{itemize}

The overall bias follows from linearity of expectation.

\begin{corollary}[Bias Direction Under Sign Consistency]
If the DM estimator has perfect sign consistency ($P(\text{sign}(\hat{\tau}^{(s)}_{\text{DM},k}) = \text{sign}(\tau^{(s)}) \mid S = s) = 1$), then:
\begin{equation}
B^{(s)}_\gamma = |\mathcal{O}_s| \times \mathbb{E}\left[(\hat{\tau}^{(s)}_{\text{DM},k} - \tau^{(s)}) \times \left(\mathbf{1}[\hat{\tau}^{(s)}_{\text{DM},k} > 0] - \mathbf{1}[\hat{\tau}^{(s)}_{\gamma,-k} > c \cdot \text{SE}]\right) \mid S = s\right].
\end{equation}
\end{corollary}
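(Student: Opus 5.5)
The statement to prove is the final corollary: under perfect sign consistency of the DM estimator, $B^{(s)}_\gamma$ reduces to its first term. I will build directly on the decomposition in Theorem~\ref{thm:regret_expected_bias}. That theorem writes $B^{(s)}_\gamma$ as the sum of two pieces. The first is the cross term $|\mathcal{O}_s|\,\mathbb{E}[(\hat{\tau}^{(s)}_{\text{DM},k}-\tau^{(s)})(\mathbf{1}[\hat{\tau}^{(s)}_{\text{DM},k}>0]-\mathbf{1}[\hat{\tau}^{(s)}_{\gamma,-k}>c\cdot\text{SE}])\mid S=s]$. The second is the sign-error term $|\mathcal{O}_s|\,\tau^{(s)}\bigl(P(\hat{\tau}^{(s)}_{\text{DM},k}>0\mid S=s)-\mathbf{1}[\tau^{(s)}>0]\bigr)$. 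The corollary therefore amounts to showing that this second piece vanishes under the hypothesis; the first piece is then carried over verbatim.

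I will follow the same three-way case split used for the sign bound in Theorem~\ref{thm:regret_expected_bias}. If $\tau^{(s)}>0$, sign consistency gives $\hat{\tau}^{(s)}_{\text{DM},k}>0$ almost surely, so $P(\hat{\tau}^{(s)}_{\text{DM},k}>0\mid S=s)=1=\mathbf{1}[\tau^{(s)}>0]$ and the difference is zero. If $\tau^{(s)}<0$, then $\hat{\tau}^{(s)}_{\text{DM},k}<0$ almost surely, so the probability is $0=\mathbf{1}[\tau^{(s)}>0]$. If $\tau^{(s)}=0$, the term carries the factor $\tau^{(s)}=0$ and vanishes regardless of the probability.

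Two further points need to be checked. First, in each case the sign-error term is zero, so $B^{(s)}_\gamma$ equals the first term exactly. Because the first term is inherited unchanged from Theorem~\ref{thm:regret_expected_bias}, no new expectation computation is needed. Second, the indicator inside the first term is $\mathbf{1}[\hat{\tau}^{(s)}_{\text{DM},k}>0]$ in both statements, so no rewriting is required. It could be replaced by $\mathbf{1}[\tau^{(s)}>0]$ almost surely, but the corollary as stated keeps it in DM form.

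The only delicate step is interpreting the hypothesis ``$\operatorname{sign}(\hat{\tau}^{(s)}_{\text{DM},k})=\operatorname{sign}(\tau^{(s)})$ with probability one'' at the boundary $\tau^{(s)}=0$. Read literally, it would force $\hat{\tau}^{(s)}_{\text{DM},k}=0$ almost surely. I will sidestep this by noting that the $\tau^{(s)}=0$ case is handled by the multiplicative factor alone, so the hypothesis is only used when $\tau^{(s)}\neq0$. Beyond that point, the argument is a direct specialization of the earlier theorem rather than a substantive obstacle.
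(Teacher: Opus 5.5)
Your argument is correct and matches the paper's own (implicit) reasoning. The corollary follows from the identity derived in the proof of Theorem~\ref{thm:regret_expected_bias}: the same three-way case split on the sign of $\tau^{(s)}$ shows that the sign-error term $\tau^{(s)}\bigl(P(\hat{\tau}^{(s)}_{\text{DM},k}>0\mid S=s)-\mathbf{1}[\tau^{(s)}>0]\bigr)$ vanishes under perfect sign consistency. One caveat: the displayed statement of Theorem~\ref{thm:regret_expected_bias} writes the rollout indicator as $\mathbf{1}[\hat{\tau}^{(s)}_{\gamma,-k}>0]$, while its proof and the corollary use $\mathbf{1}[\hat{\tau}^{(s)}_{\gamma,-k}>c\cdot\text{SE}]$, so your claim that the first term is carried over verbatim holds relative to the identity in that proof, not the theorem as literally stated.
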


This result reveals several important insights:
\begin{enumerate}
\item Sign errors create systematic bias: The second term $\tau^{(s)} \times (P(\hat{\tau}^{(s)}_{\text{DM},k} > 0 \mid S = s) - \mathbf{1}[\tau^{(s)} > 0])$ is always non-positive, creating a conservative bias that underestimates regret.

\item The bias is estimator-specific: Different estimators $\alpha$ and $\beta$ will have different bias terms $B^{(s)}_\alpha$ and $B^{(s)}_\beta$, but these may partially cancel in the test statistic $\hat{\theta}_{\text{regret}}(\alpha, \beta)$.

\item Magnitude of bias depends on:
   \begin{itemize}
   \item Accuracy of the DM estimator ($|\hat{\tau}^{(s)}_{\text{DM},k} - \tau^{(s)}|$)
   \item Sign consistency of the DM estimator
   \item Rollout probabilities of the estimators being compared
   \end{itemize}

\item Asymptotic unbiasedness: As DM estimator precision improves, both terms vanish, confirming asymptotic unbiasedness.
\end{enumerate}
        \end{appendix}

\end{document}